\documentclass[lettersize,journal]{IEEEtran}
\usepackage{amsmath,amsfonts}
\usepackage{array}
\usepackage[caption=false,font=normalsize,labelfont=sf,textfont=sf]{subfig}
\usepackage{textcomp}
\usepackage{stfloats}
\usepackage{url}
\usepackage{verbatim}
\usepackage{graphicx}
\usepackage{cite}
\usepackage{amssymb}
\usepackage{amsthm}

\graphicspath{{./figure/}}
\newtheorem{problem}{Problem}
\newtheorem{lemma}{Lemma}
\newtheorem{remark}{Remark}

\newtheorem{theorem}{Theorem}

\usepackage{xcolor,soul}
\colorlet{mycolorname}{yellow!20}

\sethlcolor{mycolorname}

\usepackage{algorithm, algpseudocode}

\begin{document}

\title{Contouring Error Bounded Control for Biaxial Systems with Structural Flexibility and Input Delay}

\author{Meng Yuan, ~\IEEEmembership{Member,~IEEE}, Tianyou Chai,~\IEEEmembership{Life Fellow,~IEEE}
\thanks{This work was supported in part by European Union (EU)-funded Marie Sklodowska-Curie Actions (MSCA) Postdoctoral Fellowship under grant number 101110832 and in part by the Research Program of the Liaoning Liaohe Laboratory under number LLL23ZZ-05-01.

Meng Yuan is with the School of Engineering and Computer Science, Victoria University of Wellington, Kelburn 6012, New Zealand (emails: meng.yuan@vuw.ac.nz).

Tianyou Chai is with the State Key Laboratory of Synthetical Automation for Process Industries, Northeastern University, Shenyang 110819, China (email:
tychai@mail.neu.edu.cn).}

}

\maketitle

\begin{abstract}

Precision contouring control is crucial in industrial machining processes, particularly for applications such as laser and water jet cutting, where contouring accuracy directly determines product quality. This paper presents a novel control strategy for biaxial machines featuring position-dependent flexibility and input delays, ensuring that the end-effector accurately traverses the desired contour within specified contouring error bounds and system constraints. To capture the rotation dynamics for systems with mechanical vibration, we introduce a high-fidelity model and explicitly consider the input delay with augmented system states. The controller design is based on the model predictive control scheme to enforce system states staying in robust control invariant sets defined by the reference model and switched linear time-invariant control-oriented models. The proposed algorithm is not restricted to a specific shape of the curve that is being traversed. The effectiveness of the proposed control algorithm is demonstrated in an experimental environment with discretizations and input delay. The results show that a bounded contouring error can be achieved by the proposed method in a performance degradation environment with a low commissioning effort.

\end{abstract}

\begin{IEEEkeywords}
Motion control, contouring control, model predictive control, biaxial system
\end{IEEEkeywords}

\section{Introduction}

\IEEEPARstart{C}{ontouring} error is defined as the shortest distance between the actual motion path and the prescribed reference trajectory, representing the degree to which the end-effector deviates from the ideal path during operation \cite{yuan2019bounded, wang2021global}. Such deviation directly affects machining precision, leading to over-cutting, undercutting, or surface quality issues, and can further result in severe assembly and functional failures, ultimately reducing product consistency and production efficiency. Therefore, precise control of contouring error is crucial as it forms the cornerstone for achieving high-precision manufacturing and efficient automation \cite{yuan2019bounded}.

The dual-drive gantry machine, with relatively higher structural rigidity and larger operating envelopes, has seen increasing interest in precision engineering applications \cite{li2018modeling}. Although the intensive usage of finite-element optimisation in the mechanical design process enhances the rigidity of structural parts in machines, the non-synchronized movement of dual drives can still happen due to the flexible linkages in machines. The resulting rotational dynamics lead to the position discrepancy between actuators and end-effector, thus deteriorating the contouring accuracy \cite{sun2024composite}. This highlights the need of considering the flexible structure in the controller design to refine contouring performance.

A range of work has been conducted to improve the contouring performance. The first category covers the cross-coupled control (CCC) based on estimated contour error \cite{koren1980cross, aarnoudse2022cross}. The CCC was proposed to minimise the contouring error by correcting the input command and forcing the axial tool position onto the path. In the original version of CCC, the corrective command is calculated from the weighted contour error and forwarded equally to axis controllers with individual axis controllers unchanged. However, the constant compensator value and cross-coupling gains make the conventional CCC ineffective in dealing with nonlinear contours and even lead to oscillation for linear contours when steady-state error tends to zero \cite{koren1991variable, kuang2019precise}. To deal with these drawbacks, methods such as introducing variable cross-coupling gains were proposed \cite{li2022design}. The cross-coupled control based approach is still prevalent in contouring control and the variants of CCC are related to the choice of contour error model and the control laws \cite{hu2021novel,aarnoudse2024cross,zhao2024cross}.

Another category of contouring control is based on the coordinate transformation where decoupled controllers are designed to minimise the transformed errors \cite{dao2021high, liu2022robust}.  The coordinate transformation-based frames simplify the controller design by decoupling the tangential and contour dynamics and its application can be found in biaxial applications including robotics \cite{chen2010coordinate} and XY tables \cite{yao2011orthogonal, lou2013task, yang2019novel}. However, the difficulty in obtaining the coordinate transformation matrix for free-form contour and the assumption that the dynamics of two axes should be small limits the applicability of the coordinate transformation based methods.

As a controller that explicitly considers operating constraints, model predictive control (MPC) has been widely applied in contouring control scenarios where such constraints are of critical importance \cite{lam2012model, yang2015pre, li2022control}. In \cite{lam2012model}, a model predictive contouring control method was proposed to deal with competing objectives, improving the contouring accuracy and minimising the traverse time. In \cite{yang2015pre}, an MPC framework was introduced to pre-compensate servo contour errors by incorporating predicted contour deviations into the reference trajectory, minimising their squared sum under kinematic constraints. To enhance contour accuracy and tracking performance in a biaxial permanent magnet linear synchronous motor system, \cite{li2022control} presents a strategy that integrates MPC with a model reference adaptive system and variable gain cross-coupling, effectively reducing dynamic mismatches and improving the robustness of system. However, neither \cite{yang2015pre} nor \cite{li2022control} provides a formal guarantee on the contouring error bound, nor do they account for structural flexibility or input delays, which motivates the present work.

Despite numerous works aimed at reducing the contouring error, several critical research gaps remain. From a practical standpoint, industrial processes such as laser and water jet cutting impose strict contouring tolerances, where even small violations can result in defective products. From a theoretical perspective, existing contouring control methods primarily focus on error minimisation without providing formal guarantees that the contouring error remains within a specified bound. First, achieving bounded contouring error while respecting system state and input constraints remains a challenge, particularly for systems with position-dependent flexibility and input delays. Second, input delays commonly encountered in industrial systems complicate controller design and may compromise system stability. Third, in flexible systems lacking real-time end-effector feedback, contouring performance validation typically relies on post-process product quality, making real-time algorithm verification challenging.

To address these challenges, this work focuses on a biaxial system with structural flexibility and input delay. To the best of the authors' knowledge, no existing contouring control method provides a formal guarantee on the contouring error bound for systems with position-dependent structural flexibility and input delay. The main contributions are summarised as follows:

\begin{enumerate}
    \item High-fidelity modelling: A high-fidelity system model is developed to capture position-dependent flexibility and end-effector dynamics, with explicit incorporation of input delays. The model is validated for both accuracy and applicability in systems with limited structural rigidity.
    \item Guaranteed contouring performance: An MPC-based control strategy is developed, in which robust control invariant (RCI) sets are employed to ensure that contouring errors remain bounded while meeting system constraints. The proposed method provides guarantees of stability and feasibility, and consistently satisfies contouring accuracy specifications in real-time operation.
    \item Experimental validation for industrial relevance: To address the absence of real-time end-effector feedback in practical applications, a back-to-back motor-based hardware-in-the-loop (HIL) platform is developed. This platform emulates realistic industrial conditions, including discretisation effects and input delays, thereby enabling rigorous evaluation of the proposed algorithm’s effectiveness under practical digital control scenarios.
\end{enumerate}

\emph{Notation}: $\mathbb{R}$ is the set of real numbers. $\mathbb{Z}_{0+}$ and $\mathbb{Z}_{[m,n]}$ are the set of non-negative integers and integer intervals with $m$ and $n$ included. A $m\times n$ zero matrix is denoted by $0_{m,n}$. Consider $a\in \mathbb{R}^{n_{a}}$, $b \in \mathbb{R}^{n_{b}}$, the stacked vector is represented as $(a,b) \triangleq [a^{T}, b^{T}]^{T} \in \mathbb{R}^{n_{a}+n_{b}}$. The set $\mathbb{B}^{n}(\rho)$ is a closed ball in $\mathbb{R}^{n}$ with $\rho$ radius in infinity norm. The vector $x(k)$ is the value measured at time instant $k$ and vector $x_{(i|k)}$ represents the predicted value of $x$ at time instant $k+i$ based on the data sampled at $k$. For two sets $\mathcal{X}$ and $\mathcal{Y}$, $\mathcal{X} - \mathcal{Y}$ represents the Minkowski difference.

\section{System Description and Problem Formulation} \label{sec:prob_formu}
\subsection{Configuration and modelling}
\begin{figure}[tb]
	\centerline{\includegraphics[width=\columnwidth]{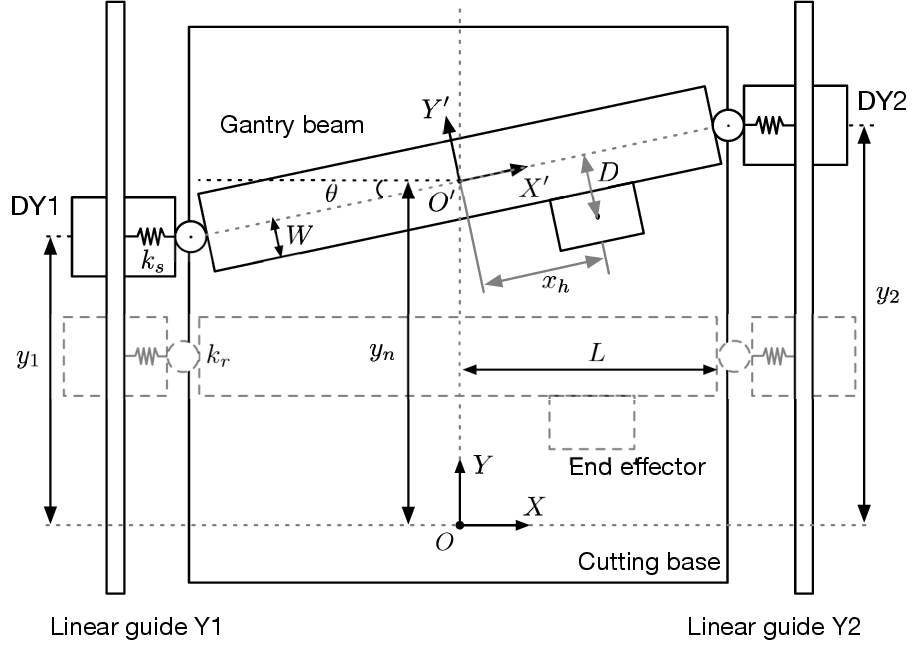}}
	\caption{Schematic diagram of industrial gantry machine.}
	\label{fig:SchemLM}
\end{figure}

The structural parts of machines tend to have more rigidity than joints when finite-element optimisation is applied to the design of industrial machines \cite{wang2015effect}. This motivates the proposition of a physics-based model with linear and torsional springs for systems with limited stiffness joints.

The investigated gantry machine is shown in Fig.~\ref{fig:SchemLM}, demonstrating the dual drives in synchronized and non-synchronized conditions. The torsional springs represent the flexible structural parts that link the drive and crossbeam.  The lateral deformation in the X-axis is applicable with the introduction of linear springs.

The position of end-effector along the X$^{\prime}$ direction is $x_{h}$, which is measurable using linear encoder. The position of dual drives DY1 and DY2 on Y-axis is denoted by $y_{1}$ and $y_{2}$. The centre of gantry beam is $y_{n} = (y_{1}+y_{2})/2$ and the angle of rotation is approximated by $\theta = (y_{2} - y_{1})/2L$ based on small angle approximation. The parameters for system of interest are summarised in Table~\ref{tab:para_sys}.

\begin{table}[tb]
	\caption{Nomenclature of dual drive system}\label{tab:para_sys}
	\centering{}
	\begin{tabular}{ccc}
		\hline
		Symbol & Description & Unit\tabularnewline
		\hline
		$M_{1}$ & Mass of drive DY1 & kg\tabularnewline
		$M_{2}$ & Mass of drive DY2 & kg\tabularnewline
		$M_{e}$ & Mass of end-effector & kg\tabularnewline
		$M_{n}$ & Mass of gantry beam & kg\tabularnewline
		$L$ & Half length of beam & m\tabularnewline
		$W$ & Half width of beam & m\tabularnewline
		$D$ & Projection distance from end-effector onto X$^{\prime}$ & m\tabularnewline
		$k_{x}$, $k_{y}$ & Force constants of X, Y-axis motors & N/A\tabularnewline
		$b_{x}$ & Viscous friction coefficient on gantry & Ns/m\tabularnewline
		$b_{y}$ & Viscous friction coefficient on linear guides & Ns/m\tabularnewline
		$k_{r}$ & Torsional spring constant & N/m\tabularnewline
		$k_{s}$ & Linear spring constant & N/m\tabularnewline
		\hline
	\end{tabular}
\end{table}

By choosing the generalised coordinate as $s = \{x_{h}, y_{n}, \theta\}$, the end-effector position $(x_{e},y_{e})$ is represented as:
\begin{align}
	\label{eq:xe_position}
	x_{e} & = x_{h} \cos\theta + D \sin\theta, \\
	\label{eq:ye_position}
	y_{e} & = y_{n} + x_{h} \sin\theta -D \cos\theta.
\end{align}

The derivation of system dynamics is conducted based on Lagrangian modelling as \cite{yuan2019modelling}:
\begin{equation}
	\frac{d}{dt}\left(\frac{\partial\mathcal{L}}{\partial\dot{s}}\right)-\frac{\partial\mathcal{L}}{\partial s}=Q_{s},
\end{equation}
where $\mathcal{L}$ is the Lagrangian function; $Q_{s}$ are the external generalised forces.

Denote the current of end-effector and dual-drive motors as $i_{x}$, $i_{1}$ and $i_{2}$, respectively. The governing equations of the mechanism are given as:
\begin{equation}
	\label{eq:eqn_xh}
	M_{e}(-x_{h}\dot{\theta}^{2}+\ddot{x}_{h}+D\ddot{\theta}+\ddot{y}_{n}\sin\theta)=k_{x}i_{x}-b_{x}\dot{x}_{h},
\end{equation}
\begin{multline}
	\label{eq:eqn_yn}
	M_{e}\sin\theta\ddot{x}_{h}+M_{t}\ddot{y}_{n}+\Gamma \ddot{\theta}+M_{e}(-x_{h}\sin\theta\dot{\theta}^{2}+2\dot{x}_{h}\cos\theta\dot{\theta} \\
	+ D \dot{\theta}^{2}\cos \theta)+M_{d}L\dot{\theta}^2\sin\theta=k_{y}(i_{1}+i_{2})-b_{y}(\dot{y}_{1}+\dot{y}_{2}),
\end{multline}
\begin{multline}
	\label{eq:eqn_theta}
	M_{e}D\ddot{x}_{h}+\Gamma\ddot{y}_{n}+\Lambda\ddot{\theta}+2k_{r}\theta+2L^{2}k_{s}\sin\theta\left(1-\cos\theta\right) \\
	+2M_{e}\dot{\theta}\dot{x}_{h}x_{h} = \left(k_{y}(i_{2}-i_{1})-b_{y}(\dot{y}_{2}-\dot{y}_{1})\right)L\cos\theta,
\end{multline}
where $M_{d} = M_{1} - M_{2}$ is the mass difference; $M_{t} = M_{1}+M_{2} + M_{e}+M_{n}$ is the total mass of system; $\Gamma = M_{e}D\sin\theta-M_{d}L\cos\theta+M_{e}x_{h}\cos\theta$ and $\Lambda = \left(M_{1}+M_{2}\right)L^{2}+M_{n}(L^{2}+W^{2})/3+M_{e}\left(D^{2}+x_{h}^{2}\right)$ are lumped terms.

The measured and model-predicted frequency-response comparisons for this high-fidelity model are reported in \cite{yuan2019modelling}.

\subsection{Control oriented model}

To derive the control-oriented model for the X-axis movement, the dynamics \eqref{eq:eqn_xh} is reformulated by lumping all terms involving $y_{n}$, $\theta$ into a disturbance term as:

\begin{equation}
	\label{eq:XContrModel}
	\ddot{x}_{h}+\frac{b_{x}}{M_{e}}\dot{x}_{h}=\frac{k_{x}}{M_{e}}i_{x}+d_{x},
\end{equation}
where the disturbance term is $d_{x}\triangleq x_{h} \dot{\theta}^2-d\ddot{\theta}-\ddot{y}_{n}\sin\theta$.

To simplify the Y-axis controller design and eliminate the vibration, the dynamics \eqref{eq:eqn_yn} and \eqref{eq:eqn_theta} are approximated at different operating points $\bar{x}_{h}$ and rearranged as follows by applying a small angle approximation:

\begin{equation}
	\label{eq:YContrModel}
	M_{t}\ddot{y}_{n}+2b_{y}\dot{y}_{n}+M_{e}\bar{x}_{h}\ddot{\theta}=k_{y}\left(i_{1}+i_{2}\right)+d_{1},
\end{equation}
\begin{equation}
	\label{eq:ThetaContrModel}
	M_{e}\bar{x}_{h}\ddot{y}_{n}+\Lambda(\bar{x}_{h})\ddot{\theta}+2b_{y}L^{2}\dot{\theta}+2k_{r}\theta=k_{y}L\left(i_{2}-i_{1}\right)+d_{2},
\end{equation}
where reasonable assumption $M_{1}=M_{2}$ holds; the lumped nonlinear terms are $d_{1}=M_{e}(\bar{x}_{h}\theta\dot{\theta}^{2}-\theta\ddot{x}_{h}-D\theta\ddot{\theta}-2\dot{x}_{h}\dot{\theta}-D\dot{\theta}^{2})$ and $d_{2}=-M_e(2\dot{\theta}\dot{x}_{h}\bar{x}_{h}+D\theta\ddot{y}_{n}+D\ddot{x}_{h})$.

The switched LTI linearisation is adopted because the Y-axis dynamics depend on $x_h$, making a single LTI model insufficient. Compared with nonlinear MPC, which requires solving non-convex optimisation problems at each time step, the switched LTI approach enables the use of efficient convex optimisation for real-time implementation. In \cite{yuan2026contouring}, a switched linear system framework with control invariant (CI) sets was developed for contouring control with structural flexibility. The present work differs by employing robust CI sets to handle bounded disturbances, incorporating input delay into the model and controller design, and providing hardware-in-the-loop experimental validation.

With sampling time $T_{s}$, the discrete-time state space model for the system is derived from~\eqref{eq:XContrModel}, \eqref{eq:YContrModel} and \eqref{eq:ThetaContrModel} as:
\begin{equation}
	\label{eq:xmSSModel}
	\xi_{x}(k+1)=A_{x}\xi_{x}(k)+B_{x}i_{x}(k)+E_{x}d_x(k),
\end{equation}
\begin{equation}
	\label{eq:YnThetaSSModel}
	\xi_{y}(k+1)=A_{y}(\bar{x}_{h})\xi_{y}(k)+B_{y}(\bar{x}_{h})i_{y}(k)+E_{y}(\bar{x}_{h})d_{y}(k),
\end{equation}
where $\xi_{x}\triangleq \left(x_{h},\dot{x}_{h}\right)$,  $\xi_{y}\triangleq ( y_{n}, \dot{y}_{n}, \theta, \dot{\theta} )$ are the states, $i_{x}$, $i_{y} \triangleq \left( i_{1}, i_{2} \right)$ are the inputs; $d_{y} \triangleq \left(d_{1}, d_{2}\right)$ is the lumped nonlinearity vector. The system coefficient matrices of \eqref{eq:xmSSModel} and \eqref{eq:YnThetaSSModel} can be inferred from \eqref{eq:XContrModel}, \eqref{eq:YContrModel} and \eqref{eq:ThetaContrModel}.

The states and inputs of the system are required to stay within constraints as:
\begin{equation}
	\label{eq:state_input_cons}
	\xi_{x}\in\mathcal{X}_{x}\subseteq \mathbb{R}^{2},\,\xi_{y}\in \mathcal{X}_{y}\subseteq \mathbb{R}^{4},\,i_{x}\in\mathcal{U}_{x}\subseteq \mathbb{R},\,i_{y}\in\mathcal{U}_{y}\subseteq \mathbb{R}^{2}.
\end{equation}

With the explicit form of $d_{x}$ and $d_{y}$, the state constraints lead to compact disturbance sets as
\[
d_{x} \in\mathcal{W}_{x}\subseteq\mathbb{R},\,d_{y}\in\mathcal{W}_{y}\subseteq\mathbb{R}^{2}.
\]
In practice, the bounds $\mathcal{W}_{x}$ and $\mathcal{W}_{y}$ are obtained by evaluating the explicit disturbance expressions $d_{x}$, $d_{1}$ and $d_{2}$ over the constrained state operating region $\mathcal{X}_{x}$ and $\mathcal{X}_{y}$, yielding conservative polytopic sets used for the offline RCI computation.

The RCI sets are computed to handle all disturbance realisations within $\mathcal{W}_x$ and $\mathcal{W}_y$. If unmodelled dynamics or parameter uncertainties are present, they can be absorbed into the disturbance sets, and the proposed framework remains effective as long as the total disturbance is bounded within these sets.

Fixed time delay can happen in practical motion control due to communication issue between servo drives and motors. To guarantee the contouring error bound, the state of system should be augmented to explicitly consider the input delay. If $T_{dx}$ step and $T_{dy}$ step input delay exist in X-axis and Y-axis of system, respectively, the dynamics becomes:
\begin{equation}
	\label{eq:Xsys_delay}
	\xi_{x}(k+1) = A_{x}\xi_{x}(k) +B_{x}i_{x}(k-T_{dx})+E_{x}d_{x}(k).
\end{equation}
\begin{equation}
    \label{eq:Ysys_delay}
    \xi_{y}(k+1)=A_{y}(\bar{x}_{h})\xi_{y}(k)+B_{y}(\bar{x}_{h})i_{y}(k-T_{dy})+E_{y}(\bar{x}_{h})d_{y}(k),
\end{equation}

\subsection{Problem statement}

The following model is used to characterise the class of references with operating constraints:
\begin{equation}
	\label{eq:RefModel}
	r_{\alpha}(k+1)=  A_{r}r_{\alpha}(k)+B_{r}u_{\alpha}(k),\,\alpha= \left\{ x,y \right\},
\end{equation}
where $r_{\alpha}$ and $u_{\alpha}$ are the state and input of reference model. The operating constraints are given as:
\begin{equation}
	\label{eq:ref_constraint}
	r_{\alpha} \in \mathcal{X}_{r} \subseteq \mathbb{R}^{n_{r}}, u_{\alpha} \in \mathcal{U}_{r} \subseteq \mathbb{R}
\end{equation}
where $\mathcal{X}_{r}$ and $\mathcal{U}_{r}$ are the compact sets for state and input constraints. For reference with position, velocity and acceleration constraints, the discrete-time system matrices $A_r$ and $B_r$ with sampling time $T_{s}$ are approximated based on Euler forward method as:
\[
A_{r}=\left[\begin{array}{cc}
	1 & T_{s}\\
	0 & 1
\end{array}\right],\,B_{r}=\left[\begin{array}{c}
	0\\
	T_{s}
\end{array}\right].
\]

For X-axis reference, $r_{x}\triangleq \left( x_{e}^{*},v_{x}^{*} \right)$ stands for the desired position and velocity and $a_{x}$ is the desired acceleration. The state $r_{y}\triangleq \left(y_{e}^{*}, v_{y}^{*}\right)$ represents the Y-axis desired position, velocity and $a_{y}$ is the acceleration.

\begin{remark}
	The model \eqref{eq:RefModel} is used to characterise the reference with operating constraints, which can be represented by control invariant (CI) set. It has to be noticed that model \eqref{eq:RefModel} is not a path planner, but the admissible reference is assumed within the reference CI set computed based on \eqref{eq:RefModel}.
\end{remark}

With the above definition, the problem investigated in this work is represented as:
\begin{problem}
	\label{pro:problem}
	Given a desired contouring error bound $\epsilon_{c}$, for systems with dynamics \eqref{eq:Xsys_delay} and \eqref{eq:Ysys_delay}, and reference characterised by constraint \eqref{eq:ref_constraint}, design a control law $u(k) = (i_x(k),i_y(k))$ such that the contouring error $ \epsilon(k) \leq \epsilon_{c}$ holds for all $k \in \mathbb{Z}_{0+}$.
\end{problem}

\begin{remark}
    Although the proposed method is investigated in the context of a biaxial system, it is worth noting that the control algorithm can be extended to handle contouring errors in Euclidean spaces and systems with higher dimensions. This is because the controller is designed based on a general linear time-varying system model, making it applicable to more complex multi-axis systems.
\end{remark}

\section{Control Architecture} \label{sec:control}

\subsection{Contouring error guarantee via axis-wise tracking bounds}

Conventionally, the contouring control involves the calculation or estimation of the contouring error. The existing algorithms for estimating the contouring error can only provide a lower bound of the actual value \cite{yao2011orthogonal}. A control algorithm designed to bound this estimated error does not necessarily guarantee the actual contouring error tolerance is satisfied. Moreover, the estimation algorithms tend to have requirements for the specific shape of the path. To address the limitations of shape-dependent contouring error estimation in conventional methods, we propose a path-agnostic control framework that guarantees contouring error bounds for arbitrary-shape contours. The key theoretical foundation is provided by the following lemma:

\begin{lemma} \label{lemma:ContVsTrackErr}
	Given the desired tolerance of the contouring error $\epsilon_{c}$, the bounded contouring error $\epsilon(k) \leq \epsilon_{c}$, $\forall k \in \mathbb{Z}_{0+}$ can be guaranteed by bounding the X-axis tracking error $\left\Vert e_{x} \right\Vert_{\infty} \leq \epsilon_{x}$ and Y-axis tracking error $\left \Vert e_{y} \right\Vert_{\infty} \leq \epsilon_{y}$  with $\epsilon_{x}+\epsilon_{y} \leq \epsilon_{c}$, where $\epsilon_{x}$ and $\epsilon_{y}$ are the upper bound of tracking errors on X and Y-axis, respectively, and tracking errors are $e_{x} \triangleq x^{*}_{e} - x_{e}$ and $e_{y} \triangleq y_{e}^{*} - y_{e}$.
\end{lemma}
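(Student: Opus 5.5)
The argument rests on the definition of contouring error as the shortest distance from the actual end-effector position $p(k)\triangleq(x_e(k),y_e(k))$ to the desired path $\mathcal{C}$. The path is the set of all reference positions $p^{*}(j)\triangleq(x_e^{*}(j),y_e^{*}(j))$ generated by the admissible reference model \eqref{eq:RefModel}. So I would write
\[
\epsilon(k)=\min_{q\in\mathcal{C}}\left\Vert p(k)-q\right\Vert_{2}.
\]
The key observation is that the current reference point $p^{*}(k)$ itself lies on $\mathcal{C}$. It is therefore an admissible, though generally suboptimal, candidate in this minimisation. This gives
\[
\epsilon(k)\le \left\Vert p^{*}(k)-p(k)\right\Vert_{2}=\left\Vert (e_x(k),e_y(k))\right\Vert_{2}.
\]
This bound holds regardless of the shape of the path, which is exactly why the approach is path-agnostic.

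The second step converts the Euclidean norm of the stacked error into the axis-wise bounds. I would use the triangle inequality on the decomposition $(e_x,e_y)=(e_x,0)+(0,e_y)$, which gives $\Vert(e_x,e_y)\Vert_2\le |e_x(k)|+|e_y(k)|$. Equivalently, one can use $\sqrt{a^2+b^2}\le a+b$ for $a,b\ge 0$. I would then note that the scalar signals satisfy $|e_x(k)|\le\Vert e_x\Vert_\infty\le\epsilon_x$ and $|e_y(k)|\le\Vert e_y\Vert_\infty\le\epsilon_y$ for every $k\in\mathbb{Z}_{0+}$. Here $\Vert\cdot\Vert_\infty$ is read as the supremum over time. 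Chaining these gives $\epsilon(k)\le\epsilon_x+\epsilon_y\le\epsilon_c$ for all $k$, which is the claim.

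The main subtlety is pinning down the definition of $\epsilon(k)$, in particular whether the minimisation ranges over the whole path or only over nearby points. If it ranges over the whole contour, the argument above goes through directly. If $\epsilon$ is instead defined via a local foot-point, I would argue that $p^{*}(k)$ is still a feasible point of the path. Since the true contouring error is the infimum over all path points, it is bounded by the distance to any particular one. Finally, I would remark that the bound $\epsilon_x+\epsilon_y$ is conservative: $\sqrt{\epsilon_x^2+\epsilon_y^2}$ would also suffice. The linear sum is retained because it matches the infinity-norm ball description $\mathbb{B}(\cdot)$ used later when building the RCI sets.
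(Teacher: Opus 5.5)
Your proposal is correct and follows the paper's proof step for step. The paper likewise bounds $\epsilon(k)$ by the Euclidean tracking error $\left\Vert (e_x(k),e_y(k))\right\Vert_2$ (you usefully make explicit that this holds because $p^{*}(k)$ lies on the path), applies $\left\Vert\cdot\right\Vert_2\le\left\Vert\cdot\right\Vert_1$ (your triangle-inequality step), and bounds $|e_x(k)|$ and $|e_y(k)|$ by their suprema over time to get $\epsilon(k)\le\epsilon_x+\epsilon_y\le\epsilon_c$; your side remark that $\sqrt{\epsilon_x^2+\epsilon_y^2}\le\epsilon_c$ would already suffice is also correct, though not needed for the lemma as stated.
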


\begin{proof}
Since the contouring error $\epsilon(k)$ is the shortest distance from the actual position to the desired path, it satisfies $\epsilon(k) \leq e(k)$, where $e(k) \triangleq \left\Vert (e_{x}(k),e_{y}(k)) \right\Vert_{2}$ is the Euclidean tracking error. Applying the standard norm inequality $\left\Vert \cdot \right\Vert_{2} \leq \left\Vert \cdot \right\Vert_{1}$ gives $e(k) \leq |e_{x}(k)|+|e_{y}(k)|$. Taking the supremum over all $k$ yields $|e_{x}(k)|+|e_{y}(k)| \leq \left\Vert e_{x} \right\Vert_{\infty}+\left\Vert e_{y} \right\Vert_{\infty}$. Therefore, the inequality chain $\epsilon(k) \leq \left\Vert e_{x} \right\Vert_{\infty}+\left\Vert e_{y} \right\Vert_{\infty} \leq \epsilon_{x}+\epsilon_{y} \leq \epsilon_{c}$ holds for all $k \in \mathbb{Z}_{0+}$.
\end{proof}

This lemma ensures path-agnostic performance, as the tracking error bounds $\epsilon_x$ and $\epsilon_y$ are independent of the contour geometry.

\subsection{Controller formulation}

Let $\bar{\xi}_{y}(k) \triangleq (\xi_{y}(k),i_{y}(k-1),\cdots,i_{y}(k - T_{dy}))$ be the state of Y-axis augmented system, to account for the input delay, the $T_{dy}$ step delay is incorporated into the system dynamics \eqref{eq:Ysys_delay} as:
\begin{align}
	\label{eq:Ysys_delay_aug}
	\bar{\xi}_{y}(k+1) & =\left[\begin{array}{ccccc}
		A_{y}(\bar{x}_{h}) & 0 & \cdots & 0 & B_{y}(\bar{x}_{h})\\
		0 & 0 & \cdots & 0 & 0\\
		0 & 1 & \cdots & 0 & 0\\
		\vdots &  & \ddots &  & \vdots\\
		0 & 0 & \cdots & 1 & 0
	\end{array}\right]\bar{\xi}_{y}(k) \nonumber \\
    & +\left[\begin{array}{c}
		0\\
		1\\
		0\\
		\vdots\\
		0
	\end{array}\right]i_{y}(k) +\left[\begin{array}{c}
		E_{y}(\bar{x}_{h})\\
		0\\
		0\\
		\vdots\\
		0
	\end{array}\right]d_{y}(k) \nonumber\\
	& \triangleq\bar{A}_{y}(\bar{x}_{h})\bar{\xi}_{y}(k)+\bar{B}_{y}i_{y}(k)+\bar{E}_{y}(\bar{x}_{h})d_{y}(k)
\end{align}

Then, the design of $i_{y}$ focuses on achieving $\left\Vert e_{y} \right\Vert_{\infty} = \left \Vert y_{e}^{*} - y_{n}-\bar{x}_{h}\theta - D \right \Vert_{\infty} \leq \epsilon_{y}$ based on \eqref{eq:Ysys_delay_aug}. The dependence on the dynamics $x_{h}$ is avoided by multiple linearisation at different $\bar{x}_{h}$. The resulted switched LTI model decouples and simplifies the controller design between axes. However, for $\left\Vert e_{x}\right \Vert_{\infty} = \left\Vert x_{e}^{*}-x_{h}-D\theta \right \Vert \leq \epsilon_{x}$, the state $\theta$ appears in the control objective for X-axis movement. The Remark~\ref{rem:XTrackBound} is presented to ensure $\left \Vert e_{x} \right \Vert_{\infty} \leq \epsilon_{x}$ with extra constraints enforced when designing $i_{y}$.

\begin{remark}\label{rem:XTrackBound}
    Given a desired X-axis tracking error bound $\epsilon_{x}$ that the system is initialized within, the tracking error $\left \Vert e_{x} \right\Vert_{\infty} \leq \epsilon_{x}$ can be guaranteed if $\left \Vert x_e^{*}-x_{h}\right \Vert_{\infty} \leq \bar{\epsilon}_{x} = \epsilon_{x}-D\theta_{\max}$ is satisfied, where $\theta_{\max}$ is the constraint on rotation angle and the value is chosen to satisfy $\left \Vert \theta \right \Vert_{\infty} \leq \theta_{\max}\leq \epsilon_{x}/D$.
\end{remark}

\begin{remark}
    The $D\theta$ term in the X-axis tracking error introduces an asymmetry between the X and Y-axis error bound. A tightening value of $\theta_{\max}$ leads to a loose constraint $\bar{\epsilon}_{x}$ for the X-axis controller but imposes a more strict tolerance $\epsilon_{y}$ when designing control input $i_{y}$.
\end{remark}

It is noted that the small-angle approximation is applied only in the control-oriented model \eqref{eq:YContrModel}--\eqref{eq:ThetaContrModel}, while the full nonlinear dynamics \eqref{eq:eqn_xh}--\eqref{eq:eqn_theta} retain the complete trigonometric terms. The validity of this approximation is ensured by the hard constraint $\left\Vert \theta \right\Vert_{\infty} \leq \theta_{\max}$ enforced via the RCI sets so that the system always operates within the valid linearisation range.

The controller design starts with the control law $i_{y}$ to ensure the following operating constraint:
\begin{equation}
	\label{eq:y_err_bound}
	 \left \Vert y_{e}^{*} - y_{e} \right \Vert_{\infty} \leq \epsilon_{y}, \, \left \Vert \theta \right\Vert_{\infty} \leq \theta_{\max}
\end{equation}

The control-oriented model \eqref{eq:Ysys_delay_aug} is assumed approximated at different operating points as $\bar{x}_{h}^{j} = \{ \bar{x}_{h}^{1},\cdots, \bar{x}_{h}^{N_{l}}\}$, for $j \in \mathbb{Z}_{[1,N_{l}]}$, where $N_{l}$ is the total number of linearisation points. Let $\Delta x_{h}$ be the interval of the system been linearised, the actual position $x_{h}$ falls in the range between linearisation points as $x_{h}=\bar{x}_{h}^{j} + \delta x_{h}, \; j \in \mathbb{Z}_{[1,N_{l}]}$ with $|\delta x_{h}|\leq \Delta x_{h}$.

Following the algorithm in \cite{yuan2019error} to achieve error bounded tracking for system with disturbance, the reference characterised by constraint \eqref{eq:ref_constraint} is assumed staying within a CI set $r_{y}\in \mathcal{R}^{r_{y}}_{\infty}$ which can be computed offline based on set iteration using model \eqref{eq:RefModel} and constraints \eqref{eq:ref_constraint}. To ensure the tracking error bound and rotation angle constraints are satisfied at different operating point $\bar{x}_{h}^{j}$, the augmented states $(\bar{\xi}_{y},r_{y})$ are required to stay in a group of RCI sets $\mathcal{R}_{\bar{x}_{h}^{j}}^{\bar{\xi}_{y},r_{y}}$, $j\in \mathbb{Z}_{[1,N_{l}]}$.

Conventional methods for computing the RCI set rely on the iteration of particular sets until two consecutive sets are equal. For systems with external disturbance, the corresponding maximal RCI set is defined by an infinite number of inequalities and the stopping criterion based on two equal sets becomes numerically difficult and intractable. Here, we follow the RCI set computation algorithm in \cite{yuan2019error} and extend it for ensuring angle constraint and bounded tracking error with finite step termination. The RCI sets $\mathcal{R}_{\bar{x}_{h}^{j}}^{\bar{\xi}_{y},r_{y}}$ at different operating points $\bar{x}_{h}^{j}$ are computed based on Algorithm~\ref{alg:RCI_comp}.

\begin{algorithm}
	\caption{RCI set computation for switched LTI system}\label{alg:RCI_comp}
	\begin{algorithmic}[1]
		\State \emph{Initialization:}
		\State $m \gets 0$, $\mathcal{R}_{0} \gets \mathcal{R}_{s} \cap \bar{\mathcal{R}}_{0},$ where
		\State $\mathcal{R}_{s} \gets \left\{ (\bar{\xi}_{y}, r_{y}) \in \mathbb{R}^{6+2T_{dy}} \, | \, \bar{\xi}_{y} \in \mathbb{R}^{4+2T_{dy}}, \, r_{y} \in \mathcal{R}^{r_{y}}_{\infty} \right\}$,
		\Statex $\bar{\mathcal{R}}_{0} \gets \{ (\bar{\xi}_{y}, r_{y}) \in \mathbb{R}^{6+2T_{dy}} \,| \, \bar{\xi}_{y} \in \mathcal{X}_{y} \times \mathcal{U}_{y}^{T_{dy}}, \left\Vert \theta \right\Vert_{\infty} \leq \theta_{\max}$, $ \left\Vert y_{e}^{*} - y_{N} - \bar{x}_{h}^{j}\theta +D \right\Vert_{\infty} \leq \epsilon_{y} \}$, $j = 1,\cdots, N_{l}$.
		\State \emph{Iteration:}\label{alg:IterCont}
		\State $\mathcal{R}_{m+1} \gets \mathcal{R}_{s} \cap \bar{\mathcal{R}}_{m+1}$, where $\bar{\mathcal{R}}_{m+1} \gets \hat{P} \left( \bar{\mathcal{R}}_{m} ,\rho \right) \cap \bar{\mathcal{R}}_{m},$
		\begin{multline*}
			\hat{P}  \left( \bar{\mathcal{R}}_{m},\rho \right) \gets \{ (\bar{\xi}_{y}, r_{y}) \in \mathbb{R}^{4+2T_{dy}} \,|\, \exists i_{y} \in  \mathcal{U}_y, \nonumber   \\
			\Bigl( \bar{A}_y(\bar{x}_{h}^{j})\bar{\xi}_{y} + \bar{B}_y i_{y} + \bar{E}_{y}(\bar{x}_h)d_{y}, A_{r}r_{y} + B_{r} u_{y} \Bigr) \in \bar{\mathcal{R}}_{m}, \nonumber \\
			 - \mathbb{B}^{6}(\rho) \forall (d_{y} \times u_{y}) \in \left(\mathcal{W}_{y} \times \mathcal{U}_{r} \right) \}.
		\end{multline*}
		\If{$\mathcal{R}_{m}  -  \mathbb{B}^{6} (\rho) \subseteq \mathcal{R}_{m+1}$}
		\State \Return $\mathcal{R}_{\bar{x}_{h}^{j}}^{\bar{\xi}_{y},r_{y}} \gets \mathcal{R}_{m+1}$
		\Else
		\State $m \gets m+1$
		\State \textbf{go to} \ref{alg:IterCont}
		\EndIf
	\end{algorithmic}
\end{algorithm}

\begin{remark}
	There is a trade-off in that using finer intervals $\Delta x_{h}$ means a better approximation of $x_h$, but leads to more controller switching and requires more off-line calculation and storage of the RCI sets.
\end{remark}

Algorithm~\ref{alg:RCI_comp} computes the RCI sets offline for the Y-axis linearised models. For each linearisation point $j\in\mathbb{Z}_{[1,N_l]}$, let $M_j$ be the number of set-iteration updates until termination. Since each update is carried out in the augmented space $(\bar{\xi}_{y},r_{y})\in\mathbb{R}^{6+2T_{dy}}$, the offline complexity is $\mathcal{O}\!\left(\sum_{j=1}^{N_l} M_j\right)$ set updates with fixed problem size per update. The X-axis case is obtained in the same way.

Based on the feedback $x_{h}$ at time instant $k$, the value of $j$ and the linearisation point $\bar{x}_{h}^{j}$ are determined following $\bar{x}_{h}^{j} - \frac{\Delta x_{h}}{2} \leq x_{h} \leq \bar{x}_{h}^{j} + \frac{\Delta x_{h}}{2}$. With $N \in \mathbb{Z}_{0+}$ step future reference $\gamma_{y}^{N}(k)=\left(r_{y}(k),\cdots,r_y(k+N-1)\right)$, the predictive controller solves the below optimisation:
\begin{align}
	U_{y}^{*}\left(k\right) =& \arg\min_{U_{y}(k)}\sum_{i=0}^{N-1}\Bigl(Q_{y}\left(y_{e}^{*}(k+i)-y_{e(i|k)}\right)^{2} \nonumber \\
	& \qquad \qquad + \left\Vert i_{y(i|k)}\right\Vert^2_{R_y} \Bigr)\nonumber \\
	s.t. \;\;& \bar{\xi}_{y(i+1|k)}=\bar{A}_{y}(\bar{x}_{h}^{j})\bar{\xi}_{y(i|k)}+\bar{B}_{y}i_{y(i|k)},\nonumber \\
	& y_{e(i|k)}=\left[\begin{array}{ccccc}
		1 & 0 & \bar{x}_{h}^{j} & 0 & 0_{1,2T_{dy}}\end{array}\right]\bar{\xi}_{y(i|k)}-D,\nonumber \\
	& y_{e}^{*}(k+i)=\left[\begin{array}{cc}
		1 & 0\end{array}\right]r_{y}(k+i), \nonumber \\
	& \bar{\xi}_{y(i|k)}\in\mathcal{R}^{\bar{\xi}_{y},r_{y}}_{\bar{x}_{h}^{j}}(\bar{\xi}_{y},r_{y}(k+i)),\, i\in\mathbb{Z}_{[0,N-1]}, \nonumber \\
	& \qquad \qquad \qquad j\in \mathbb{Z}_{[1,N_l]}, \nonumber \\
	& \bar{\xi}_{y(0|k)}=\bar{\xi}_{y}(k),
	\label{eq:MPC_Y_Theta}
\end{align}
where $\left\Vert i_{y(i|k)} \right\Vert^2_{R_y}=i_{y(i|k)}^TR_y i_{y(i|k)}$, $Q_{y}$ and $R_{y}$ are the tuning weights on the Y-axis position tracking error and control input, respectively. The optimal control input set is $U_{y}^{*}(k) \triangleq \left(i_{y(0|k)}^{*}, \cdots, i_{y(N-1|k)}^{*}\right)$. At each time instant, the optimal control $i_{y}(k) = i_{y(0|k)}^{*}$ is applied to the plant. The constraint of rotation angle $\theta$, that is required in Remark~\ref{rem:XTrackBound}, is guaranteed by enforcing the state $\bar{\xi}_{y}$ within the calculated RCI set.

The state of reference for X-axis is assumed within the control invariant set $r_{x} \in \mathcal{R}_{\infty}^{r_{x}}$. The design of control law for $i_{x}$ is to ensure the following error bound:
\begin{equation}
	\label{eq:x_err_bound}
	\left\Vert x_{e}^{*}-x_{h}\right\Vert_{\infty} \leq \bar{\epsilon}_{x}
\end{equation}

Then, let $\bar{\xi}_{x}(k) \triangleq (\xi(k),i_{x}(k-1),\cdots,i_{x}(k-T_{dx}))$ be the state of X-axis augmented system, to account for the input delay, the $T_{dx}$ step delay is incorporated into the system dynamics \eqref{eq:Xsys_delay} as:

\begin{align}
	\label{eq:Xsys_delay_aug}
	\bar{\xi}_{x}(k+1) & =\left[\begin{array}{ccccc}
		A_{x} & 0 & \cdots & 0 & B_{x}\\
		0 & 0 & \cdots & 0 & 0\\
		0 & 1 & \cdots & 0 & 0\\
		\vdots &  & \ddots &  & \vdots\\
		0 & 0 & \cdots & 1 & 0
	\end{array}\right]\bar{\xi}_{x}(k)+\left[\begin{array}{c}
		0\\
		1\\
		0\\
		\vdots\\
		0
	\end{array}\right]i_{x}(k) \nonumber\\
	& +\left[\begin{array}{c}
		E_{x}\\
		0\\
		0\\
		\vdots\\
		0
	\end{array}\right]d_{x}(k) \nonumber\\
	& \triangleq\bar{A}_{x}\bar{\xi}_{x}(k)+\bar{B}_{x}i_{x}(k)+\bar{E}_{x}d_{x}(k)
\end{align}

 To ensure the constraint \eqref{eq:x_err_bound} is satisfied, the Algorithm~\ref{alg:RCI_comp} is modified with $\mathcal{R}_{s} \allowbreak = \allowbreak \left \{ \allowbreak (\bar{\xi}_{x}, r_{x}) \in \mathbb{R}^{6+2T_{dx}} \, | \, \bar{\xi}_{x} \in \mathbb{R}^{4+2T_{dx}}, \, r_{x} \in \mathcal{R}^{r_{x}}_{\infty} \right\}$ and $\bar{\mathcal{R}}_{0} = \allowbreak \left\{ (\bar{\xi}_{x}, r_{x}) \allowbreak \in \mathbb{R}^{6+2T_{dy}} \,| \, \bar{\xi}_{x} \in \mathcal{X}_{x}, \, \left \Vert x_{e}^{*} - x_{h} \right \Vert _{\infty} \leq \bar{\epsilon}_{x} \right\}$ to initialize the iteration and the following set computation is utilised for the RCI set $\mathcal{R}^{r_{x}}_{\infty}$ computation:
\begin{multline*}
	\hat{P}  \left( \bar{\mathcal{R}}_{m},\rho \right) \gets \{ (\bar{\xi}_{x}, r_{x}) \in \mathbb{R}^{6+2T_{dx}} \,|\, \exists i_{x} \in  \mathcal{U}_x, \nonumber \\
	\left( \bar{A}_{x} \bar{\xi}_{x} + \bar{B}_{x}i_{x}+\bar{E}_{x}d_{x}, A_{r}r_{x} + B_{r} u_{x} \right) \in \bar{\mathcal{R}}_{m} - \mathbb{B}^{6+2T_{dx}}(\rho), \nonumber \\
	\forall (d_{x} \times u_{x}) \in \left(\mathcal{W}_{x} \times \mathcal{U}_{r} \right) \}.
\end{multline*}

Based on the computed RCI set $\mathcal{R}^{\bar{\xi}_{x},r_{x}}$ and $N$ steps reference trajectory, i.e., $\gamma_{x}^{N}(k)= \left( r_{x}(k),\cdots,r_{x}(k+N-1) \right) $, the control law $i_{x}$ is computed by solving the following optimisation problem:
\begin{align}
	U_{x}^{*}\left(k\right) =& \arg\min_{U_{x}(k)}\sum_{i=0}^{N-1}\Bigl(Q_{x}\left(x_{e}^{*}(k+i)-x_{h(i|k)}\right)^{2} \nonumber \\
	& \qquad \qquad +R_{x}i_{x(i|k)}^{2} \Bigr)\nonumber \\
	s.t. \;\;& \bar{\xi}_{x(i+1|k)}=\bar{A}_{x}\bar{\xi}_{x(i|k)}+\bar{B}_{x}i_{x(i|k)},\nonumber \\
	& x_{h(i|k)}=\left[\begin{array}{cc}
		1 & 0\end{array}\right]\bar{\xi}_{x(i|k)},\nonumber \\
	& x_{e}^{*}(k+i)=\left[\begin{array}{cc}
		1 & 0\end{array}\right]r_{x}(k+i), \nonumber \\
	& \bar{\xi}_{x(i|k)}\in\mathcal{R}^{\bar{\xi}_{x},r_{x}}(\bar{\xi}_{x},r_{x}(k+i)),\, i \in\mathbb{Z}_{[0,N-1]}, \nonumber \\
	& \bar{\xi}_{x(0|k)}=\bar{\xi}_{x}(k),
	\label{eq:MPC_X}
\end{align}
where $Q_{x}$ and $R_{x}$ are the cost function weight on X-axis position error and control input respectively; $U_{x}^{*}(k) = \left( i_{x(0|k)}^{*}, \cdots, i_{x(N-1|k)}^{*} \right)$. At each time instant, the optimal control input $i_{x}(k) \triangleq i_{x(0|k)}^{*}$ is applied to the plant.

\subsection{Analysis of closed-loop property}

The following theorem presents the recursive feasibility of the proposed control law \eqref{eq:MPC_Y_Theta} and  \eqref{eq:MPC_X}.

\begin{theorem}
    \label{the:feasibility}
    Consider the reference subject to \eqref{eq:ref_constraint}, system \eqref{eq:xmSSModel} and \eqref{eq:YnThetaSSModel} subject to \eqref{eq:state_input_cons}, and operating constraint \eqref{eq:y_err_bound} and \eqref{eq:x_err_bound} hold. If the control law \eqref{eq:MPC_Y_Theta} and \eqref{eq:MPC_X} is feasible at time instant $k\in \mathbb{Z}_{0+}$, then they are feasible for all $\bar{k}$, $\bar{k} \geq k$, and the closed-loop system satisfies \eqref{eq:state_input_cons}, \eqref{eq:y_err_bound} and \eqref{eq:x_err_bound}.
\end{theorem}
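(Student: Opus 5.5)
The argument is a standard invariance-based recursive feasibility proof, carried out separately for each axis and then combined through Remark~\ref{rem:XTrackBound} and Lemma~\ref{lemma:ContVsTrackErr}.

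\textbf{Step 1: robust control invariance of the sets from Algorithm~\ref{alg:RCI_comp}.} For each linearisation point $\bar{x}_h^j$ I would show that the returned set $\mathcal{R}_{\bar{x}_h^j}^{\bar{\xi}_y,r_y}$ is robustly control invariant for the augmented dynamics \eqref{eq:Ysys_delay_aug} coupled with the reference model \eqref{eq:RefModel}. The termination test $\mathcal{R}_m-\mathbb{B}(\rho)\subseteq\mathcal{R}_{m+1}$ together with the definition of $\hat{P}(\bar{\mathcal{R}}_m,\rho)$ should give the following. For every $(\bar{\xi}_y,r_y)\in\mathcal{R}_{m+1}$ there exists $i_y\in\mathcal{U}_y$ such that, for all $(d_y,u_y)\in\mathcal{W}_y\times\mathcal{U}_r$, the successor lies in $\mathcal{R}_m-\mathbb{B}(\rho)\subseteq\mathcal{R}_{m+1}$. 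The tightening margin $\rho$ is what lets the finite-step stopping criterion replace exact equality of consecutive sets. The identical argument applies to the X-axis set built from \eqref{eq:Xsys_delay_aug}.

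\textbf{Step 2: the constraints hold inside the sets.} Since every iterate is contained in $\bar{\mathcal{R}}_0$, membership in the RCI set implies $\xi_y\in\mathcal{X}_y$, the stored delayed inputs lie in $\mathcal{U}_y$, $\|\theta\|\le\theta_{\max}$, and the Y tracking error is at most $\epsilon_y$. Likewise for the X-axis, membership implies \eqref{eq:x_err_bound} and $\xi_x\in\mathcal{X}_x$. The applied input lies in $\mathcal{U}_y$ (resp.\ $\mathcal{U}_x$) by construction. Hence \eqref{eq:state_input_cons}, \eqref{eq:y_err_bound} and \eqref{eq:x_err_bound} hold at any instant at which the augmented state is in the corresponding set.

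\textbf{Step 3: recursive feasibility.} Suppose \eqref{eq:MPC_Y_Theta} and \eqref{eq:MPC_X} are feasible at $k$, so $(\bar{\xi}_y(k),r_y(k))$ lies in the active RCI set. After applying $i_{y(0|k)}^*$, the true successor differs from the nominal prediction only through $\bar{E}_y d_y(k)$ with $d_y(k)\in\mathcal{W}_y$. The reference update uses some admissible $u_y\in\mathcal{U}_r$ because $r_y\in\mathcal{R}^{r_y}_\infty$. By Step 1, $(\bar{\xi}_y(k+1),r_y(k+1))$ is again in the set. At $k+1$ I would build a candidate input sequence by applying the invariance-generating input at each predicted step, which keeps the nominal predictions inside the set. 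Nominal predictions correspond to the realisation $d_y=0$; this requires $0\in\mathcal{W}_y$, which I would state as an assumption. The candidate is therefore feasible, and induction gives feasibility for all $\bar{k}\ge k$. The X-axis argument is analogous, and the two problems are decoupled, so joint feasibility follows.

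\textbf{Main obstacle: switching between linearisation points.} When $x_h$ crosses into a new interval, the index changes from $j$ to $j'$, and membership in $\mathcal{R}_{\bar{x}_h^{j}}$ does not by itself imply membership in $\mathcal{R}_{\bar{x}_h^{j'}}$. My first attempt would be to absorb the mismatch $\delta x_h$, with $|\delta x_h|\le\Delta x_h$, into $\mathcal{W}_y$ as the paper permits. I would then argue that the sets for adjacent points overlap on the reachable region, or more simply impose that the successor lies in the intersection of the sets for neighbouring $j$. Failing that, I would state the theorem for each fixed mode together with the assumption that the state at a switching instant lies in the new mode's set. This assumption can be ensured offline by checking a containment of the form $\mathcal{R}_{\bar{x}_h^{j}}-\mathbb{B}(\rho)\subseteq\mathcal{R}_{\bar{x}_h^{j\pm1}}$. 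Finally, Remark~\ref{rem:XTrackBound} and Lemma~\ref{lemma:ContVsTrackErr} convert the maintained bounds into the contouring error guarantee.
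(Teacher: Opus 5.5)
Your plan has the same skeleton as the paper's proof: invariance of the augmented $(\bar{\xi},r)$ set, a candidate sequence built at $k+1$, and the Y-axis treated like the X-axis. However, one step in Step 3 does not hold as written: ``By Step 1, $(\bar{\xi}_y(k+1),r_y(k+1))$ is again in the set.''

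Robust control invariance is existential: from each point of $\mathcal{R}^{\bar{\xi}_y,r_y}_{\bar{x}_h^j}$, \emph{some} input keeps the successor in the set for every $d_y\in\mathcal{W}_y$. The input actually applied is $i^*_{y(0|k)}$. Problem \eqref{eq:MPC_Y_Theta} only requires the \emph{nominal} prediction $\bar{\xi}_{y(1|k)}=\bar{A}_y(\bar{x}_h^j)\bar{\xi}_y(k)+\bar{B}_y i^*_{y(0|k)}$ to lie in the slice of the set at $r_y(k+1)$; for $N=1$ it does not constrain $i_{y(0|k)}$ at all. The optimiser may therefore place $\bar{\xi}_{y(1|k)}$ on the boundary, and then $\bar{\xi}_y(k+1)=\bar{\xi}_{y(1|k)}+\bar{E}_y(\bar{x}_h^j)d_y(k)$ can leave the set. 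Since $\bar{\xi}_{y(0|k+1)}=\bar{\xi}_y(k+1)$ is data, the problem at $k+1$ is then infeasible whatever candidate you build. The inductive step fails before your candidate construction is even reached.

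The paper's proof makes the same jump. It passes from $(\bar{\xi}_{x(1|k)},r_x(k+1))\in\mathcal{R}^{\bar{\xi}_x,r_x}$ directly to $(\bar{\xi}_x(k+1),r_x(k+1))\in\mathcal{R}^{\bar{\xi}_x,r_x}$ without mentioning $d_x$. So this is not a departure from the paper, but the step is unjustified in both.

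The missing ingredient is a robust first-step constraint. Restrict $i_{y(0|k)}$ to $\{i\in\mathcal{U}_y : (\bar{A}_y(\bar{x}_h^j)\bar{\xi}_y(k)+\bar{B}_y i+\bar{E}_y(\bar{x}_h^j)d,\,r_y(k+1))\in\mathcal{R}^{\bar{\xi}_y,r_y}_{\bar{x}_h^j}\ \forall d\in\mathcal{W}_y\}$, which is nonempty on the set by the RCI property. With that, your construction works as planned. The invariance-generating input at $(\bar{\xi}_y(k+1),r_y(k+1))$ meets the robust first-step constraint by definition. The later invariance-generating inputs keep the nominal tail in the set precisely because $0\in\mathcal{W}_y$, which you were right to make explicit.

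Your switching discussion goes beyond the paper, whose proof treats only the single-model X-axis and calls the Y-axis ``similar''. Some extra condition of the kind you propose is genuinely needed for the statement as given.

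Two bookkeeping corrections apply to Step 1 and to the switching check. By the definition of $\hat{P}$, a point of the returned set is steered into $\bar{\mathcal{R}}_m-\mathbb{B}(\rho)$, and, for admissible references, into $\mathcal{R}_s$. It is not necessarily steered into $\mathcal{R}_m-\mathbb{B}(\rho)=(\mathcal{R}_s-\mathbb{B}(\rho))\cap(\bar{\mathcal{R}}_m-\mathbb{B}(\rho))$, because the reference component need not stay $\rho$ inside $\mathcal{R}^{r_y}_\infty$. Consequently:
\begin{enumerate}
\item the termination test must be read on the $\bar{\mathcal{R}}$ iterates to conclude invariance;
\item your offline switching check should be posed on $(\bar{\mathcal{R}}_m-\mathbb{B}(\rho))\cap\mathcal{R}_s$ for mode $j$, with the first-step constraint steering into that set, rather than on the smaller set $\mathcal{R}_{\bar{x}_h^j}-\mathbb{B}(\rho)$.
\end{enumerate}
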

\begin{proof}
For the recursive feasibility of \eqref{eq:MPC_X}, it means given the optimal solution $U_{x}^{*}(k) = ( i_{x(0|k)}^{*}, \cdots, i_{x(N-1|k)}^{*} )$ at any time instant $k\in \mathbb{Z}_{0+}$, there exists a feasible solution $\tilde{U}_{x}(k+1) = (\tilde{i}_{x(0|k+1)}, \cdots, \tilde{i}_{ x(N-1|k+1) })$ at time $k+1$.

At time instant $k$, $i_{x}(k) = i_{x(0|k)}^{*}$ applies to the plant and $(\bar{\xi}_{x(1|k)}, r_{x}(k+1)) \in \mathcal{R}^{\bar{\xi}_{x},r_{x}}$ holds for every $a_{x}(k)\in \mathcal{U}_{r}$. Since $r_{x}(k) \in \mathcal{X}_{r}$, at $k+1$ time instant $(\bar{\xi}_{x}(k+1),r(k+1))\in \mathcal{R}^{\bar{\xi}_{x},r_{x}}$ holds, which means there exists $\tilde{i}_{x(0|k+1)}$ such that $(\bar{\xi}_{x(1|k+1)}, r(k+1+1))\in \mathcal{R}^{\bar{\xi}_{x},r_{x}}$ for $r(k+1) \in \mathcal{X}_{r}$. Since reference $r(k+1+i)$ for $i = 0,\cdots, N-1$ satisfies the reference constraint \eqref{eq:ref_constraint}, control input $\tilde{U}_{x}(k+1)$ can be constructed and $\tilde{U}_{x}(k+1)$ is feasible. The proof for \eqref{eq:MPC_Y_Theta} is similar and is thus omitted.
\end{proof}

\begin{remark}
    \label{coro:solution}
    Under the condition of Theorem~\ref{the:feasibility}, the control input $u(k) = (i^{*}_{x(0|k)}, i^{*}_{y(0|k)})$ for $k \in \mathbb{Z}_{0+}$ is the solution of Problem~\ref{pro:problem}.
\end{remark}

\begin{remark}

The Theorem~\ref{the:feasibility} and Remark~\ref{coro:solution} hold for systems with input delay if the augmented system is used in RCI set computation and MPC formulation.
\end{remark}

\subsection{Procedure of implementing proposed algorithm}

The control architecture consists of two main stages, as illustrated in Fig.~\ref{fig:Schematic}. In the offline stage, the RCI sets are computed for each linearisation point using the switched LTI models and the disturbance sets. In the online stage, the feedback $x_h$ determines the active linearisation point, the corresponding RCI set is selected, and the MPC optimisation is solved to compute the control inputs $i_x$ and $i_y$ while enforcing the RCI set constraints.

\begin{figure*}[t]
    \centerline{\includegraphics[width=0.6\textwidth]{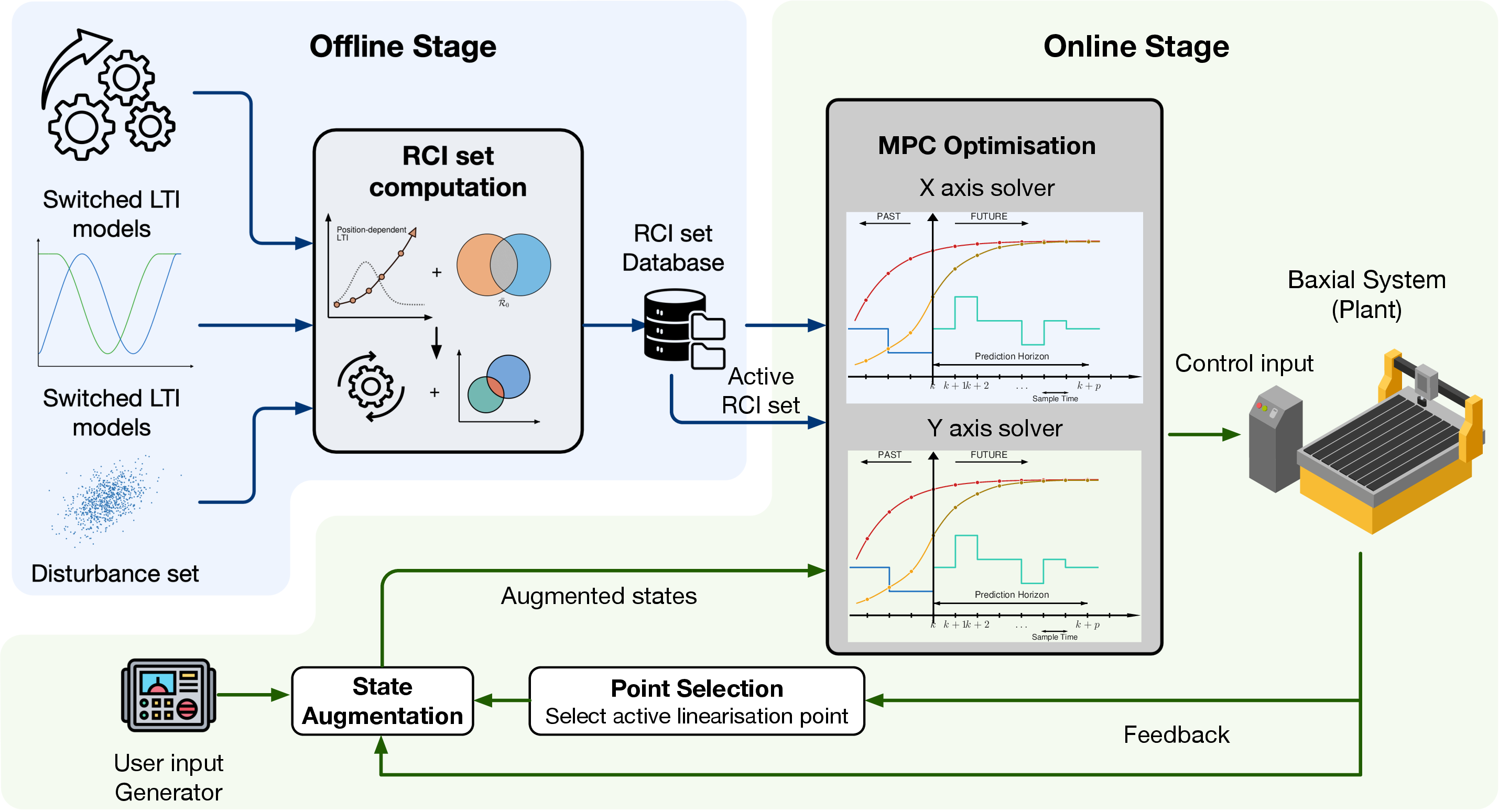}}
    \caption{Block diagram of the proposed contouring error-bounded control architecture.}
    \label{fig:Schematic}
\end{figure*}

To obtain a clear overview of implementing the proposed control algorithm, the procedure is summarised as:

\subsubsection*{Offline}

\begin{enumerate}
	\item Choose the value of tracking error bound $\epsilon_{x}$, $\epsilon_{y}$ based on the given contouring error bound $\epsilon_{c}$ and $\epsilon_{x}+\epsilon_{y}\leq \epsilon_{c}$.
	\item Choose the value of the maximum rotation angle constraint $\theta_{max}$ based on $\theta_{max} \leq \epsilon_{x}/D$ and compute the $\bar{\epsilon}_{x}$ using $\bar{\epsilon}_{x} = \epsilon_{x} - D\theta_{max}$.
	\item Determine the linearisation point $\bar{x}_{h}^{j}$, $j \in \mathbb{Z}_{[1,N_{l}]}$ based on the operation range $\xi_{x}\in \mathcal{X}_{x}$.
	\item Obtain the augmented control-oriented models as shown in \eqref{eq:Ysys_delay_aug} and \eqref{eq:Xsys_delay_aug}.
	\item Compute the reference CI set $\mathcal{R}^{r_{x}}_{\infty}$ and $\mathcal{R}^{r_{y}}_{\infty}$ based on the requirements of path been traced, i.e., sets of $\mathcal{X}_{r}$ and $\mathcal{U}_{r}$.
	\item Compute the RCI sets $\mathcal{R}^{\bar{\xi}_{x},r_{x}}$ and $\mathcal{R}_{\bar{x}_{h}^{j}}^{\bar{\xi}_{y},r_{y}}$ using the proposed algorithm.
\end{enumerate}

\subsubsection*{Online}

\begin{enumerate}
	\item Obtain the current states $\bar{\xi}_{x}(k)$ and $\bar{\xi}_{y}(k)$ based on feedback $x_{h}(k)$, $y_{1}(k)$ and $y_{2}(k)$.
	\item Determine the value of $j$ in $\bar{x}_{h}^{j}$ based on the feedback of $x_{h}$, and update the $\mathcal{R}_{\bar{x}_{h}^{j}}^{\bar{\xi}_{y},r_{y}}$ in \eqref{eq:MPC_Y_Theta}.
	\item Compute $U_{y}^{*}(k)$ and $U_{x}^{*}(k)$ using \eqref{eq:MPC_Y_Theta} and \eqref{eq:MPC_X} respectively.
	\item Implement control input $u(k) = (i_{x(0|k)}^{*}, i_{y(0|k)}^{*})$.
\end{enumerate}

\section{Results} \label{sec:result}

Ideally, the contouring performance of the proposed controller should be tested on the laser machine described in Section~\ref{sec:prob_formu}. However, due to positional discrepancies between the actuators and the end-effector, and the lack of direct positional feedback from the end-effector, the performance can only be evaluated indirectly through post-process product quality.

To address this limitation, a platform is designed that replicates the biaxial system dynamics while incorporating input delay and discretisation conditions. This section details the design of the test bench, the replication of the biaxial system, and the performance validation results of the proposed controller.

\subsection{System identification}
To parameterize the proposed model, data were collected from a commercial laser machine with a moving distance of $1.5$ m and $3$ m for the X and Y-axis, respectively. Analogue incremental encoders ERN1387 with 2048 line resolution are used to measure the position of actuators.

A $4$ A peak amplitude chirp-signal current is implemented on the Y-axis motor to excite the system when the X-axis motor is held constant at different positions. The parameters are identified based on the frequency response from the Y-axis current to the dual-drive velocities using the collected data. The physical system exhibits changes in anti-resonance frequency when the X-axis motor moves along the gantry beam, exhibiting a state-dependence of this motor on the resulting dynamics and validated results are demonstrated in \cite{yuan2019modelling}, which is omitted here.

\subsection{Experiment setup and configuration}

\begin{figure}[tbp]
	\centerline{\includegraphics[width = 0.9 \columnwidth]{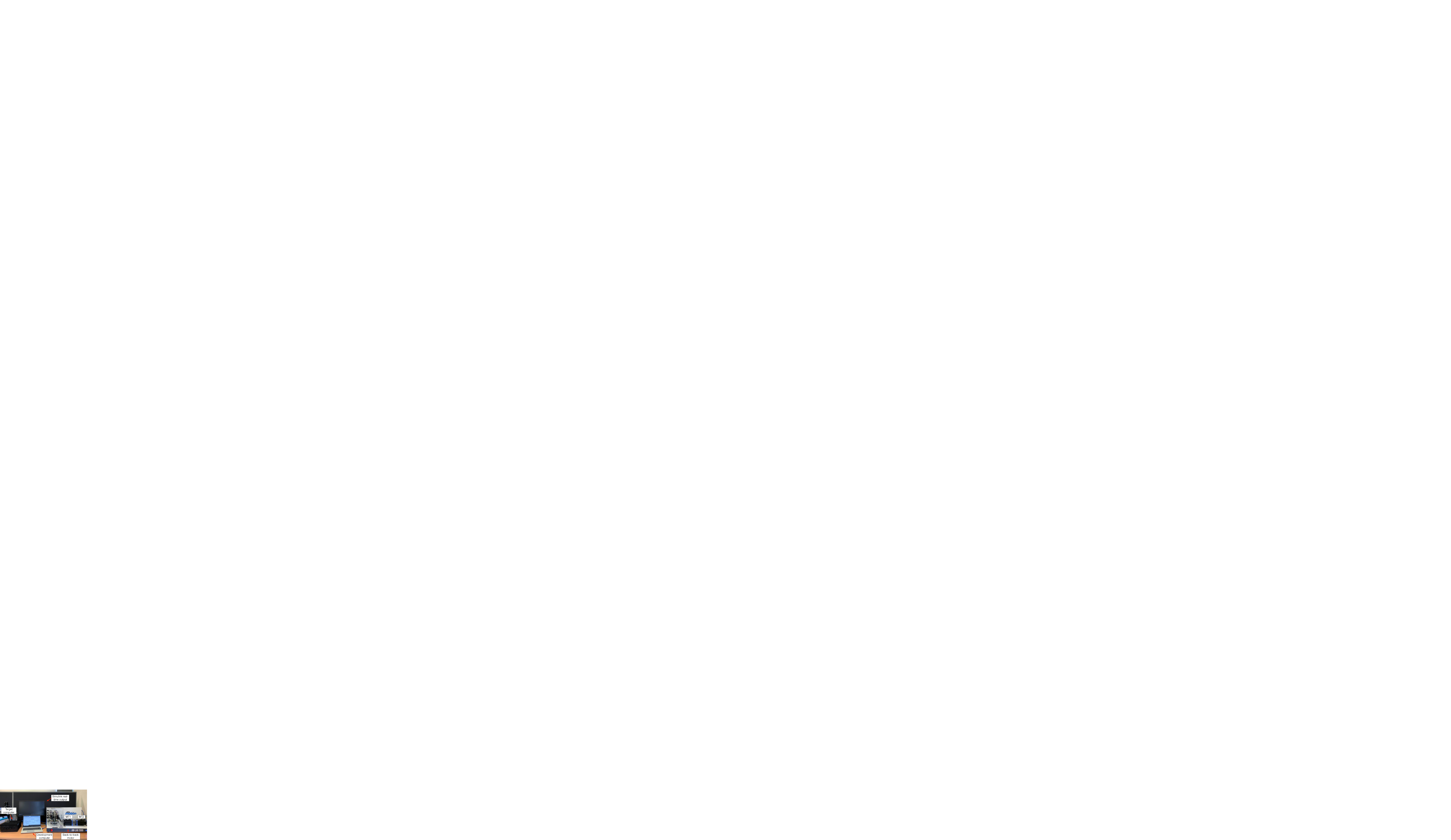}}
	\caption{Experimental test bench for contouring control.}{\label{fig:ContTestBench}}
\end{figure}

To replicate the biaxial system, a hardware-in-the-loop test bench was developed, as illustrated in Fig.~\ref{fig:ContTestBench}. It comprises a development computer, a target computer, and two identical rotary motors arranged in a back-to-back configuration. The controller is implemented online using the Simulink Real-Time environment. The target computer is a Dell Embedded Box PC 5000 equipped with an Intel Core i7-6820EQ processor.

The motors are equipped with two independent servo drives, identical to those used in actual industrial applications. Prior to the deployment of the proposed contouring control algorithms, the current loop response was analysed, and the results are shown in Fig.~\ref{fig:CurrentDelay}. The communication rate between the target PC and the servo drive is first configured at a fast sampling rate with 4~kHz, with a proportional-integral (PI) controller embedded at the drive level for current tracking. Although the servo drive has an intrinsic control cycle of 1~ms, an additional 1~ms delay is observed in the closed-loop response due to communication latency between the target PC and the drive. The controller is therefore configured to operate at a 2~ms update rate, and a one-sample input delay is effectively introduced. This delay is taken into account in the subsequent modelling and control design.

\begin{figure}[tb]
	\centerline{\includegraphics[width = 0.85\columnwidth]{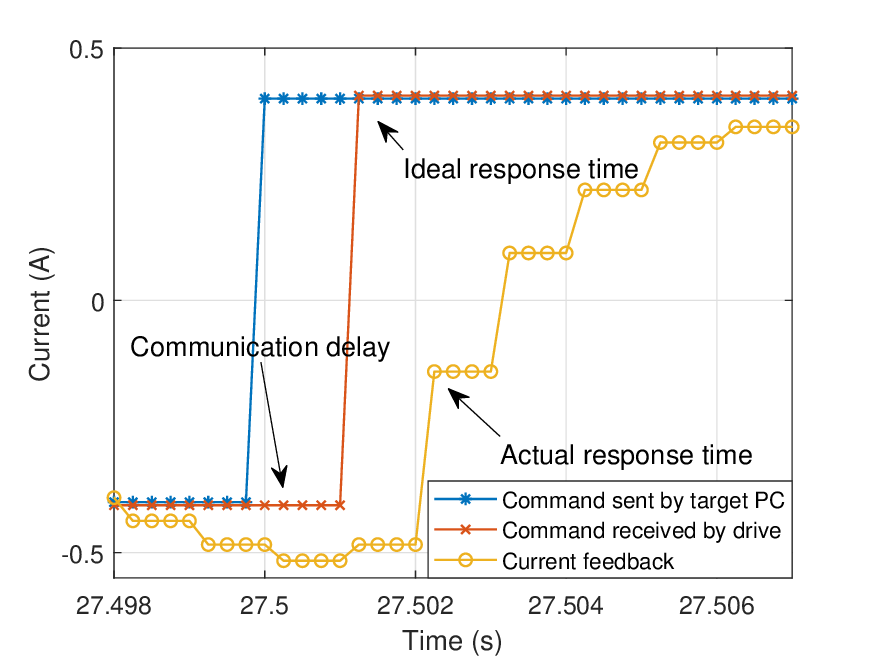}}
	\caption{Closed-loop response of current loop on motor MT1.}{\label{fig:CurrentDelay}}
\end{figure}

By properly generating the load torque for motor MT2 and applying a rotary-to-translational motion conversion, the movement of motor MT1 can represent the dynamics of the $x_{h}$ coordinate. The configuration begins with the dynamics of motor MT1, described as follows:
\begin{equation}
    \label{eq:AMRotary}
    J_{A}\ddot{\theta}_{h}=k_{A}i_{x}-b_{r}\dot{\theta}_{h}-F_{L}-F_{n},
\end{equation}
where $J_{A}$ is the total moment of inertial including the contributions from both motors and the coupling mechanism, $\theta_{h}$ is the rotary angle of motor MT1, $k_{A}$ is the torque constant of the motors, $b_{r}$ is the rotary viscous-friction coefficient, $F_{L}$ is the load torque imposed on motor MT1 with expression $F_{L} = k_{A}i_{T}$ and $i_{T}$ is the current of the motor MT2, $F_{n}$ is the unmodeled nonlinear effects.

With the unit conversion $\theta_{h} \triangleq \frac{x_{h}}{r_{A}}$, where $r_{A}$ is the radius of the motor shaft, the dynamics \eqref{eq:AMRotary} is rewritten as:
\begin{equation}
	\label{eq:AMLinear}
	\ddot{x}_h = \frac{k_{A}}{M_{A}}i_{x} - \frac{b_{A}}{M_{A}} \dot{x}_{h} -\frac{F_{L}}{M_{A}} - \frac{F_{n}}{M_{A}},
\end{equation}
where $M_{A} \triangleq \frac{J_{A}}{r_{A}}$ is the equivalent total mass and $b_{A} \triangleq \frac{b_{r}}{r_{A}}$ is the viscous friction coefficient for the translational movement.

By choosing the torque command of motor MT2 as
\begin{equation}
	\label{eq:MT2icommand}
	F_{L}  = M_{A}\left( (\frac{k_{A}}{M_{A}} - \frac{k_{x}}{M_{e}})i_{x} - ( \frac{b_{A}}{M_{A}}- \frac{b_{x}}{M_{e}})\dot{x}_{h}  - d_{x}\right),
\end{equation}
where $M_{e}$, $k_{x}$ and $b_{x}$ are the parameters of the industrial laser machine, the dynamics of system \eqref{eq:AMLinear} becomes:
\begin{equation}
	\label{eq:SimuXhDynamics}
	\ddot{x}_{h} = \frac{k_{x}}{M_{e}}i_{x} - \frac{b_x}{M_{e}}\dot{x}_{h}+d_{x}-\frac{F_{n}}{M_{A}}.
\end{equation}

With the consideration of one-sample delay in the current loop, the dynamics of the test bench in discrete time with sampling $T_{s}$ becomes:
\begin{align*}
	\dot{x}_{h}(k+1) = & \frac{T_{s}k_{A}}{M_{A}}i_{x}(k-1) + \left( 1-\frac{T_{s}b_{A}}{M_{A}} \right) \dot{x}_{h}(k) \\
	& - \frac{T_{s}F_{L}(k-1)}{M_{A}} - \frac{T_{s}F_{n}(k)}{M_{A}}.
\end{align*}

The delay of current leads to a delayed torque when implementing the command \eqref{eq:MT2icommand} on motor MT2 as :
\begin{align*}
	F_{L}(k-1) = & M_{A}\Bigl( (\frac{k_{A}}{M_{A}} - \frac{k_{x}}{M_{e}})i_{x}(k-1) \\
	& - ( \frac{b_{A}}{M_{A}}- \frac{b_{x}}{M_{e}} )\dot{x}_{h}(k-1)  - d_{x}(k-1) \Bigr).
\end{align*}

Then the dynamics of system is:
\begin{equation}
	\label{eq:XsysDelayDynamics}
	\dot{x}_{h}(k+1) = \frac{T_{s}k_{x}}{M_{e}}i_{x}(k-1) + \left(1-\frac{T_{s}b_{A}}{M_{A}}\right) \dot{x}_{h}(k)+F_{N}(k)
\end{equation}
where $F_{N}(k) = T_{s}\left(\frac{b_{A}}{M_{A}} - \frac{b_{x}}{M_{e}}\right) \dot{x}_{h}(k-1) + T_{s}d_{x}(k-1) - \frac{T_{s}F_{n}(k)}{M_{A}}$ can be considered as the disturbance and is assumed within a bounded set $F_{N} \in \mathcal{W}_{N}$.

With $\xi_{x} \triangleq \left(x_{h}, \dot{x}_{h}\right)$ and $i_x(k-1)$ be the state and input respectively, the dynamics \eqref{eq:XsysDelayDynamics} is represented in state space form as:
\begin{equation}
	\label{eq:XsysDelayss}
	\xi_{x}(k+1)= A_{e}\xi_{x}(k) + B_{e}i_x(k-1)+E_{e}F_{N}(k),
\end{equation}
where the matrix $A_{e}$, $B_{e}$ and $E_{e}$ can be inferred from \eqref{eq:XsysDelayDynamics}.

Following the step in augmenting the system dynamics, let $\bar{\xi}_{x}(k) \triangleq \left( \xi_{x}(k), i_{x}(k-1) \right)$ be the state of the augmented system, the one-sample delay is incorporated into the X-axis system dynamics \eqref{eq:XsysDelayss} as:
\begin{align}
	\label{eq:XsysDelayCont}
	\left[\begin{array}{c}
		\xi_{x}(k+1)\\
		i_{x}(k)
	\end{array}\right]  = & \left[\begin{array}{cc}
		A_{e} & B_{e}\\
		0_{1,2} & 0
	\end{array}\right]\left[\begin{array}{c}
		\xi_{x}(k)\\
		i_{x}(k-1)
	\end{array}\right] \nonumber \\
	& +\left[\begin{array}{c}
		0_{2,1}\\
		1
	\end{array}\right]i_{x}(k)+\left[\begin{array}{c}
		E_{e}\\
		0
	\end{array}\right]F_{N}(k) \nonumber\\
	\triangleq & \bar{A}\bar{\xi}_{x}(k) + \bar{B}i_{x}(k) + \bar{E}F_{N}(k).
\end{align}

This control-oriented model \eqref{eq:XsysDelayCont} and bounded disturbance $F_{N}\in \mathcal{W}_{N}$ are used to compute the RCI set $\mathcal{R}^{\bar{\xi}_x,r_x}$ based on the proposed algorithm. The nominal model $\bar{\xi}_{x}(k+1) = \bar{A}\bar{\xi}_{x}(k)+\bar{B}i_{x}(k)$ is used in the MPC based formulation \eqref{eq:MPC_X} for error bounded tracking.

The back-to-back motor configuration is employed to replicate the dynamics described in \eqref{eq:eqn_xh}, with the nonlinear disturbance $F_{n}$ inherently introduced by this setup. This experimental platform is utilised in conjunction with the simulated dynamics \eqref{eq:eqn_yn} and \eqref{eq:eqn_theta}, which are executed on the target computer to emulate the plant for the biaxial contouring control experiments.

\subsection{Contouring results}

To validate the efficacy of the proposed algorithm, the end-effector is required to follow a contour consisting of a circular path and a straight line with the maximum velocity at $0.1$ m/s and maximum acceleration at $1$ m/s$^2$. The circle has centre at $(0,0)$ with $0.08$ m radius.

The desired contouring error bound $\epsilon_{c}$ is chosen as $4$ mm, equivalent to $5\%$ of the circular radius in the desired trajectory. The controller is configured to update at $T_{s} = 2$ ms and the data sampling rate is chosen as $4$ kHz. The set manipulation is based on MPT3 toolbox \cite{herceg2013multi}. The solver CVXGEN is utilised to ensure the fast computation of the optimisation problem \cite{mattingley2012cvxgen}. The CVXGEN solver is implemented with a maximum of $25$ iterations, an optimality tolerance of $10^{-4}$, and a feasibility residual tolerance of $10^{-6}$.

In order to guarantee the $4$ mm contouring error bound, the upper bound of X and Y-axis tracking error is chosen as $\epsilon_{x} = 2$ mm, $\epsilon_{y} = 2$ mm. The upper bound of rotation angle is chosen as $\theta_{\max} = 0.0025$ rad and this gives $\bar{\epsilon}_{x}=1.5$ mm. The selection of these parameters follows directly from the design requirements. The contouring bound $\epsilon_c = 4$~mm corresponds to a $5\%$ tolerance on the circular path radius, set by the application. Since both axes exhibit comparable dynamic characteristics, the bound is equally partitioned as $\epsilon_x = \epsilon_y = 2$~mm. According to Remark~\ref{rem:XTrackBound}, the rotation angle constraint must satisfy $\theta_{\max} \leq \epsilon_x / D$ to ensure a positive tightened bound $\bar{\epsilon}_x > 0$. The value $\theta_{\max} = 0.0025$~rad is selected conservatively to both satisfy this requirement and maintain the system within the valid small-angle linearisation regime, yielding $\bar{\epsilon}_x = 1.5$~mm as the effective X-axis tracking bound.

For the Y-axis control, the switching LTI model \eqref{eq:YnThetaSSModel} is linearised at point $\bar{x}_{h}^{j} = \left\{ -0.075, -0.025, 0.025, 0.075 \right\}$ m for $j \in \mathbb{Z}_{[1,4]}$ with $\Delta x_{h} = 0.025$ m to achieve a trade-off between the accuracy of the control-oriented model and numbers of controller switching.

Two sets of tuning parameters in MPC cost function \eqref{eq:MPC_Y_Theta} and \eqref{eq:MPC_X} are chosen to test the effect of tuning on contouring performance using the proposed control algorithm. The first set of parameters (Tuning A) is chosen with values $Q_{x} = Q_{y}=10^{5}$, $R_{x} = 0.1$ and $R_{y} = \text{diag}(0.1,0.1)$ to put more emphasis on the tracking errors. The second set of tuning parameters (Tuning B) is selected in a less aggressive way with values $Q_{x} = Q_{y} = 10^{3}$, $R_{x} = 0.5$ and $R_{y} = \text{diag}(0.5,0.5)$. The prediction horizon is set to $N = 2$ for both axes MPC. This short horizon is chosen to ensure real-time feasibility at the $2$~ms sampling rate.

It is worth noting that the contouring error guarantee is determined by the RCI set constraints rather than the MPC tuning weights. The contouring error bound $\epsilon_c$ is specified by the manufacturing requirement, and the axis-wise bounds $\epsilon_x$, $\epsilon_y$ are chosen such that $\epsilon_x + \epsilon_y \leq \epsilon_c$, which does not affect the guaranteed bound. The linearisation interval $\Delta x_h$ is a modelling choice. A finer interval provides a closer approximation of the feasible set, making the RCI set computation easier to converge, while the guaranteed performance is maintained as long as the RCI sets exist. The MPC cost weights influence the transient behaviour within the bound, as demonstrated by the two tuning sets, but the bound itself is always satisfied.

For control performance based on Tuning A, the tracking errors on the X, Y-axis and rotation angle are shown in Fig.~\ref{fig:TrackPerfExp1}. The maximum tracking errors of X and Y-axis are $1.35$ mm and $1.23$ mm, respectively, and both values are within the desired tracking error tolerance. The rotation angle is much smaller compared to the upper bound. The corresponding tracking performance based on Tuning B is given in Fig.~\ref{fig:TrackPerfExp2}. The less aggressive tuning results in slightly larger maximum tracking errors on the X and Y-axis as $1.48$ mm and $1.27$ mm respectively, but errors are still within the tolerance. It can be seen that although putting more weight on tracking performance may result in a smaller maximum tracking error, at the same time, it results in more oscillation in tracking.

\begin{figure}[tbp]
	\centerline{\includegraphics[width = 0.85 \columnwidth]{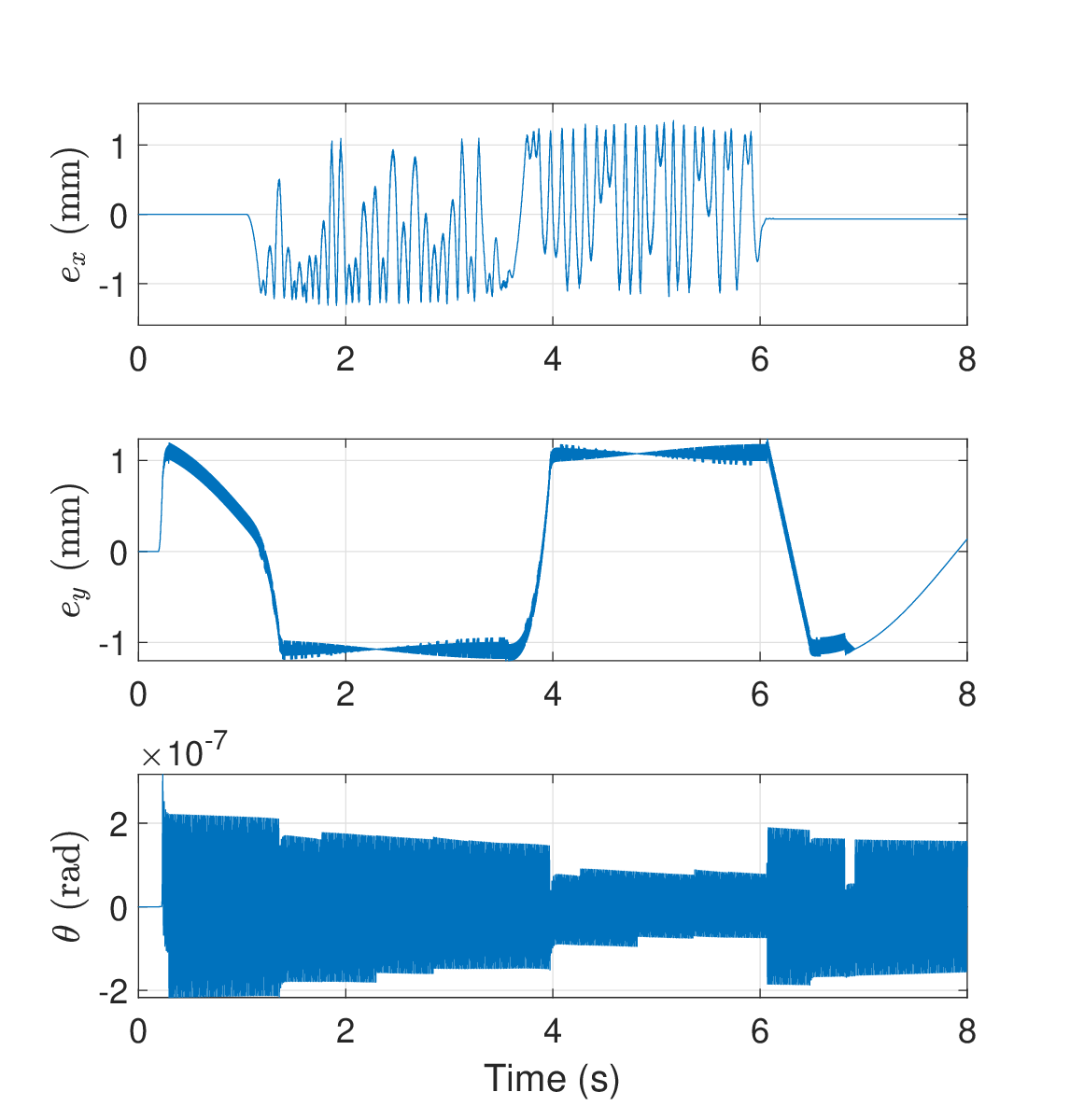}}
	\caption{Experimental tracking performance based on Tuning A: X-axis error (up), Y-axis error (middle) and rotation angle (down).}{\label{fig:TrackPerfExp1}}
\end{figure}

\begin{figure}[tbp]
	\centerline{\includegraphics[width = 0.85 \columnwidth]{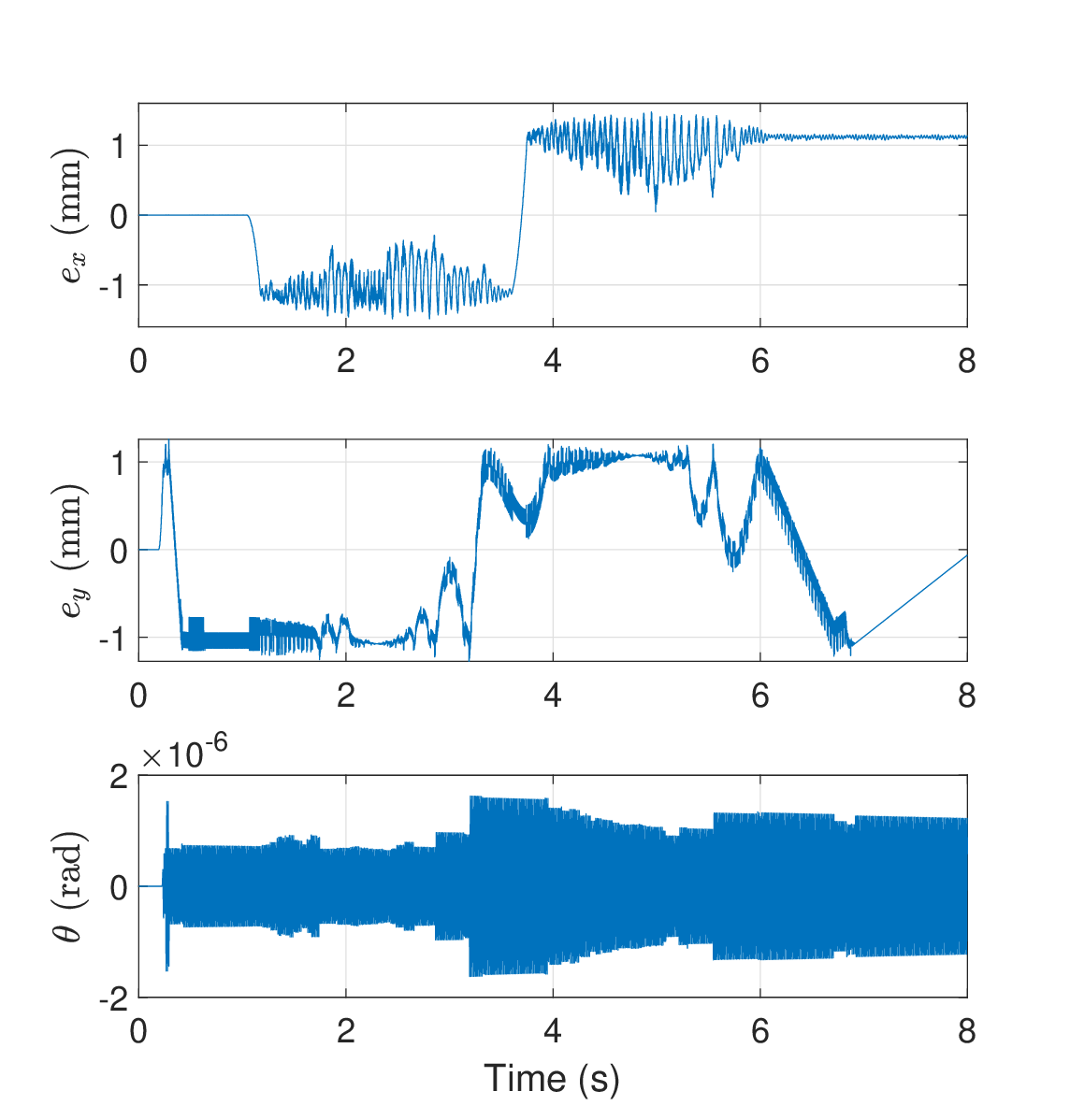}}
	\caption{Experimental tracking performance based on Tuning B: X-axis error (up), Y-axis error (middle) and rotation angle (down).}{\label{fig:TrackPerfExp2}}
\end{figure}

The computed contouring errors based on two tuning parameters are shown in Fig.~\ref{fig:ContErr_Exp}. Since the end-effector position is not directly measured on the real machine, the contouring error reported here is computed from the emulated end-effector states $(x_e, y_e)$, which are reconstructed from the HIL states $(x_h, y_n, \theta)$ using \eqref{eq:xe_position} and \eqref{eq:ye_position}. The maximum contouring errors are $1.65$ mm and $1.62$ mm for Tuning A and B respectively. Both of the contouring errors are within the desired tolerance, and it shows that by detuning the proposed controller, the contouring error can still be bounded. Although a larger steady-state contouring error occurs using the detuned controller, this steady-state error can be removed by including the integrator augmentation in the MPC formulation \cite{stephens2012model}.

\begin{figure}[tbp]
	\centerline{\includegraphics[width = 0.85 \columnwidth]{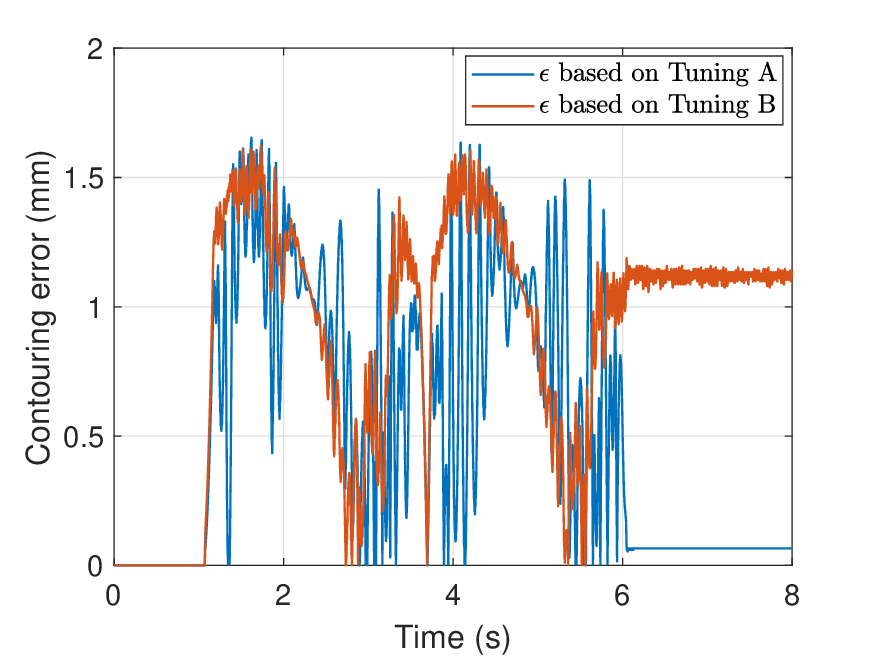}}
	\caption{Experimental contouring error during the whole process.}{\label{fig:ContErr_Exp}}
\end{figure}

\begin{figure}[tbp]
	\centerline{\includegraphics[width = 0.85 \columnwidth]{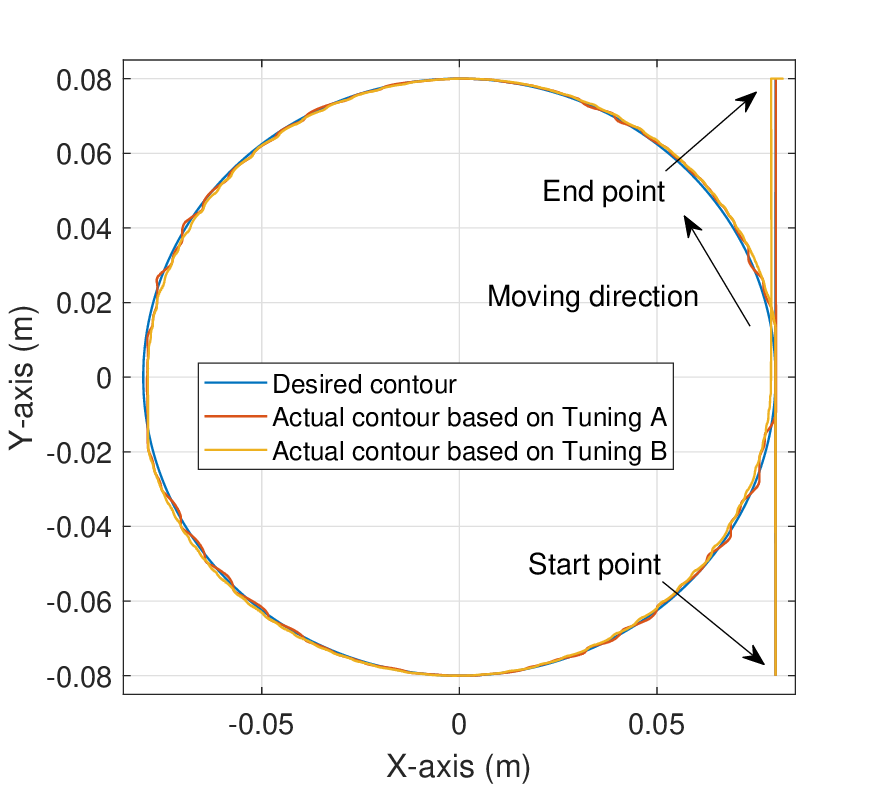}}
	\caption{Desired and actual contour based on different tuning.}{\label{fig:ContPlot_Exp}}
\end{figure}

The comparison of the contouring path achieved by two-different-tuning controllers is shown in Fig.~\ref{fig:ContPlot_Exp}. It is clear that putting more penalties on tracking errors results in more obvious vibration along the path and the proposed error bounded control method can ensure the contour error is not violated for different tuning schemes.

\section{Conclusion}

In this work, a contouring error-bounded control architecture is proposed for dual-drive systems exhibiting position-dependent flexibility and input delay. Experimental results validate the effectiveness of the proposed approach in maintaining prescribed contouring error bounds under input delay and discretisation effects. The presented algorithm shows advantages including its universal applicability for different shapes of desired contour and low commissioning effort. Furthermore, the proposed control framework is extendable to other systems characterised by structural flexibility, enabling contouring error-bounded performance across a wider class of applications. Future work may consider an extension to a linear parameter-varying (LPV) formulation, where the position-dependent dynamics are captured continuously rather than through discrete switching. Combining LPV models with formal contouring error guarantees via RCI sets remains an open theoretical challenge. The generalisation to multi-axis systems is another natural direction for future investigation.

\section*{Acknowledgments}
The authors would like to thank Prof Chris Manzie from University of Melbourne and Prof Iman Shames from Australian National University for their enlightening discussion. We would also like to express our gratitude to Dr Lu Gan at ANCA Motion for his insightful discussion from a practical perspective.

\bibliographystyle{bib/IEEEtran}
\bibliography{bib/TIE_ref}

\begin{thebibliography}{10}
\providecommand{\url}[1]{#1}
\csname url@samestyle\endcsname
\providecommand{\newblock}{\relax}
\providecommand{\bibinfo}[2]{#2}
\providecommand{\BIBentrySTDinterwordspacing}{\spaceskip=0pt\relax}
\providecommand{\BIBentryALTinterwordstretchfactor}{4}
\providecommand{\BIBentryALTinterwordspacing}{\spaceskip=\fontdimen2\font plus
\BIBentryALTinterwordstretchfactor\fontdimen3\font minus
  \fontdimen4\font\relax}
\providecommand{\BIBforeignlanguage}[2]{{%
\expandafter\ifx\csname l@#1\endcsname\relax
\typeout{** WARNING: IEEEtran.bst: No hyphenation pattern has been}%
\typeout{** loaded for the language `#1'. Using the pattern for}%
\typeout{** the default language instead.}%
\else
\language=\csname l@#1\endcsname
\fi
#2}}
\providecommand{\BIBdecl}{\relax}
\BIBdecl

\bibitem{yuan2019bounded}
M.~Yuan, C.~Manzie, M.~Good, I.~Shames, F.~Keynejad, and T.~Robinette,
  ``Bounded error tracking control for contouring systems with end effector
  measurements,'' in \emph{2019 IEEE Internati onal Conference on Industrial
  Technology (ICIT)}.\hskip 1em plus 0.5em minus 0.4em\relax IEEE, 2019, pp.
  66--71.

\bibitem{wang2021global}
W.~Wang, J.~Ma, Z.~Cheng, X.~Li, C.~W. de~Silva, and T.~H. Lee, ``Global
  iterative sliding mode control of an industrial biaxial gantry system for
  contouring motion tasks,'' \emph{IEEE/ASME Transactions on Mechatronics},
  vol.~27, no.~3, pp. 1617--1628, 2021.

\bibitem{li2018modeling}
C.~Li, B.~Yao, and Q.~Wang, ``Modeling and synchronization control of a dual
  drive industrial gantry stage,'' \emph{IEEE/ASME Transactions on
  Mechatronics}, vol.~23, no.~6, pp. 2940--2951, 2018.

\bibitem{sun2024composite}
W.~Sun, J.~Liu, and H.~Gao, ``A composite high-speed and high-precision
  positioning approach for dual-drive gantry stage,'' \emph{IEEE Transactions
  on Automation Science and Engineering}, 2024.

\bibitem{koren1980cross}
Y.~Koren, ``Cross-coupled biaxial computer control for manufacturing systems,''
  \emph{Journal of Dynamic Systems, Measurement, and Control}, vol. 102, no.~4,
  pp. 265--272, Dec. 1980.

\bibitem{aarnoudse2022cross}
L.~Aarnoudse, J.~Kon, K.~Classens, M.~van Meer, M.~Poot, P.~Tacx,
  N.~Strijbosch, and T.~Oomen, ``Cross-coupled iterative learning control for
  complex systems: A monotonically convergent and computationally efficient
  approach,'' in \emph{2022 IEEE 61st Conference on Decision and Control
  (CDC)}.\hskip 1em plus 0.5em minus 0.4em\relax IEEE, 2022, pp. 1485--1490.

\bibitem{koren1991variable}
Y.~Koren and C.-C. Lo, ``Variable-gain cross-coupling controller for
  contouring,'' \emph{CIRP annals}, vol.~40, no.~1, pp. 371--374, 1991.

\bibitem{kuang2019precise}
Z.~Kuang, X.~Li, H.~Wang, H.~Gao, and G.~Sun, ``Precise variable-gain
  cross-coupling contouring control for linear motor direct-drive table,'' in
  \emph{2019 American Control Conference (ACC)}.\hskip 1em plus 0.5em minus
  0.4em\relax IEEE, 2019, pp. 5737--5742.

\bibitem{li2022design}
L.~Li, N.~Cheung, G.~Yang, C.~Wang, P.~Fu, G.~Li, and J.~Pan, ``Design of a
  variable-gain adjacent cross-coupled controller for coordinated motion of
  multiple permanent magnet linear synchronous motors,'' \emph{Computers and
  Electronics in Agriculture}, vol. 192, p. 106561, 2022.

\bibitem{hu2021novel}
N.-T. Hu, L.-Y. Chen, and C.-S. Chen, ``Novel cross-coupling position command
  shaping controller using {H}$\infty$ in multiaxis motion systems,''
  \emph{IEEE Transactions on Industrial Electronics}, vol.~69, no.~12, pp.
  13\,099--13\,110, 2021.

\bibitem{aarnoudse2024cross}
L.~Aarnoudse, J.~Kon, K.~Classens, M.~van Meer, M.~Poot, P.~Tacx,
  N.~Strijbosch, and T.~Oomen, ``Cross-coupled iterative learning control: A
  computationally efficient approach applied to an industrial flatbed
  printer,'' \emph{Mechatronics}, vol.~99, p. 103170, 2024.

\bibitem{zhao2024cross}
J.~Zhao, S.~Gao, Z.~Pan, L.~Wang, and Z.~Yu, ``Cross-coupled synchronous
  control of dual linear motor servo system based on disturbance-assignment
  observer and iterative learning control,'' \emph{IEEE Transactions on
  Transportation Electrification}, 2024.

\bibitem{dao2021high}
H.~V. Dao, S.~Na, D.~G. Nguyen, and K.~K. Ahn, ``High accuracy contouring
  control of an excavator for surface flattening tasks based on extended state
  observer and task coordinate frame approach,'' \emph{Automation in
  Construction}, vol. 130, p. 103845, 2021.

\bibitem{liu2022robust}
Y.~Liu, W.~Sun, C.~Buccella, and C.~Cecati, ``Robust control of
  dual-linear-motor-driven gantry stage for coordinated contouring tasks based
  on feed velocity,'' \emph{IEEE Transactions on Industrial Electronics},
  vol.~70, no.~6, pp. 6229--6238, 2022.

\bibitem{chen2010coordinate}
C.-L. Chen and C.-C. Peng, ``Coordinate transformation based contour following
  control for robotic systems,'' in \emph{Advances in Robot
  Manipulators}.\hskip 1em plus 0.5em minus 0.4em\relax IntechOpen, 2010.

\bibitem{yao2011orthogonal}
B.~Yao, C.~Hu, and Q.~Wang, ``An orthogonal global task coordinate frame for
  contouring control of biaxial systems,'' \emph{IEEE/ASME Transactions on
  Mechatronics}, vol.~17, no.~4, pp. 622--634, 2011.

\bibitem{lou2013task}
Y.~Lou, H.~Meng, J.~Yang, Z.~Li, J.~Gao, and X.~Chen, ``Task polar coordinate
  frame-based contouring control of biaxial systems,'' \emph{IEEE Transactions
  on Industrial Electronics}, vol.~61, no.~7, pp. 3490--3501, 2013.

\bibitem{yang2019novel}
X.~Yang, R.~Seethaler, C.~Zhan, D.~Lu, and W.~Zhao, ``A novel contouring error
  estimation method for contouring control,'' \emph{IEEE/ASME Transactions on
  Mechatronics}, vol.~24, no.~4, pp. 1902--1907, 2019.

\bibitem{lam2012model}
D.~Lam, C.~Manzie, and M.~C. Good, ``Model predictive contouring control for
  biaxial systems,'' \emph{IEEE Transactions on Control Systems Technology},
  vol.~21, no.~2, pp. 552--559, 2012.

\bibitem{yang2015pre}
S.~Yang, A.~H. Ghasemi, X.~Lu, and C.~E. Okwudire, ``Pre-compensation of servo
  contour errors using a model predictive control framework,''
  \emph{International Journal of Machine Tools and Manufacture}, vol.~98, pp.
  50--60, 2015.

\bibitem{li2022control}
Z.~Li, J.~Wang, J.~An, Q.~Zhang, Y.~Zhu, H.~Liu, and H.~Sun, ``Control strategy
  of biaxial variable gain cross-coupled permanent magnet synchronous linear
  motor based on mpc-mras,'' \emph{IEEE Transactions on Industry Applications},
  vol.~58, no.~4, pp. 4733--4743, 2022.

\bibitem{wang2015effect}
L.~Wang, H.~Liu, L.~Yang, J.~Zhang, W.~Zhao, and B.~Lu, ``The effect of axis
  coupling on machine tool dynamics determined by tool deviation,''
  \emph{International Journal of Machine Tools and Manufacture}, vol.~88, pp.
  71--81, 2015.

\bibitem{yuan2019modelling}
M.~Yuan, C.~Manzie, L.~Gan, M.~Good, and I.~Shames, ``Modelling and contouring
  error bounded control of a biaxial industrial gantry machine,'' in \emph{2019
  IEEE Conference on Control Technology and Applications (CCTA)}.\hskip 1em
  plus 0.5em minus 0.4em\relax IEEE, 2019, pp. 388--393.

\bibitem{yuan2026contouring}
M.~Yuan, Y.~Wang, C.~Manzie, Z.~Xu, and T.~Chai, ``Contouring error-bounded
  control for biaxial switched linear systems,'' \emph{IEEE Transactions on
  Systems, Man, and Cybernetics: Systems}, 2026.

\bibitem{yuan2019error}
M.~Yuan, C.~Manzie, M.~Good, I.~Shames, L.~Gan, F.~Keynejad, and T.~Robinette,
  ``Error-bounded reference tracking mpc for machines with structural
  flexibility,'' \emph{IEEE Transactions on Industrial Electronics}, vol.~67,
  no.~10, pp. 8143--8154, 2019.

\bibitem{herceg2013multi}
M.~Herceg, M.~Kvasnica, C.~N. Jones, and M.~Morari, ``Multi-parametric toolbox
  3.0,'' in \emph{2013 European control conference (ECC)}.\hskip 1em plus 0.5em
  minus 0.4em\relax IEEE, 2013, pp. 502--510.

\bibitem{mattingley2012cvxgen}
J.~Mattingley and S.~Boyd, ``Cvxgen: A code generator for embedded convex
  optimization,'' \emph{Optimization and Engineering}, vol.~13, pp. 1--27,
  2012.

\bibitem{stephens2012model}
M.~A. Stephens, C.~Manzie, and M.~C. Good, ``Model predictive control for
  reference tracking on an industrial machine tool servo drive,'' \emph{IEEE
  Transactions on Industrial Informatics}, vol.~9, no.~2, pp. 808--816, 2012.

\end{thebibliography}

\begin{IEEEbiography}[{\includegraphics[width=1in,height=1.25in,clip,keepaspectratio]{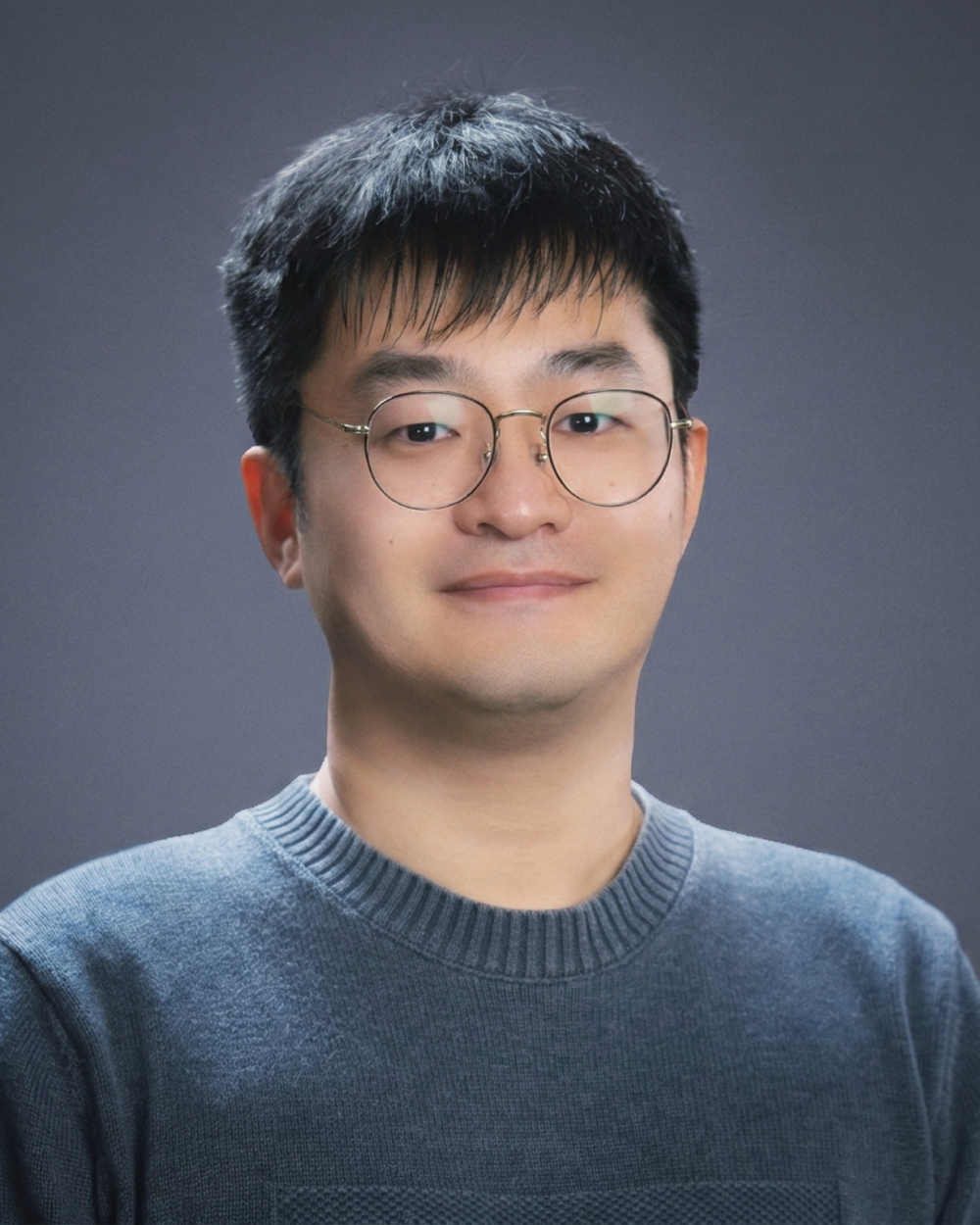}}]{Meng Yuan} (Member, IEEE) received the Ph.D. degree in electrical and electronic engineering from the University of Melbourne, Australia in 2020. He was a Research Fellow at the Rehabilitation Research Institute of Singapore, Nanyang Technological University, Singapore. From 2023 to 2025, he was a Researcher and Marie Skłodowska-Curie Fellow at Chalmers University of Technology, Sweden. He is currently a Lecturer (Assistant Professor) in the School of Engineering and Computer Science at Victoria University of Wellington, New Zealand. His research interests include modeling, learning-based control and model predictive control for batteries, energy systems, mechatronics, and robotics.
\end{IEEEbiography}

\begin{IEEEbiography}[{\includegraphics[width=1in,height=1.25in,clip,keepaspectratio]{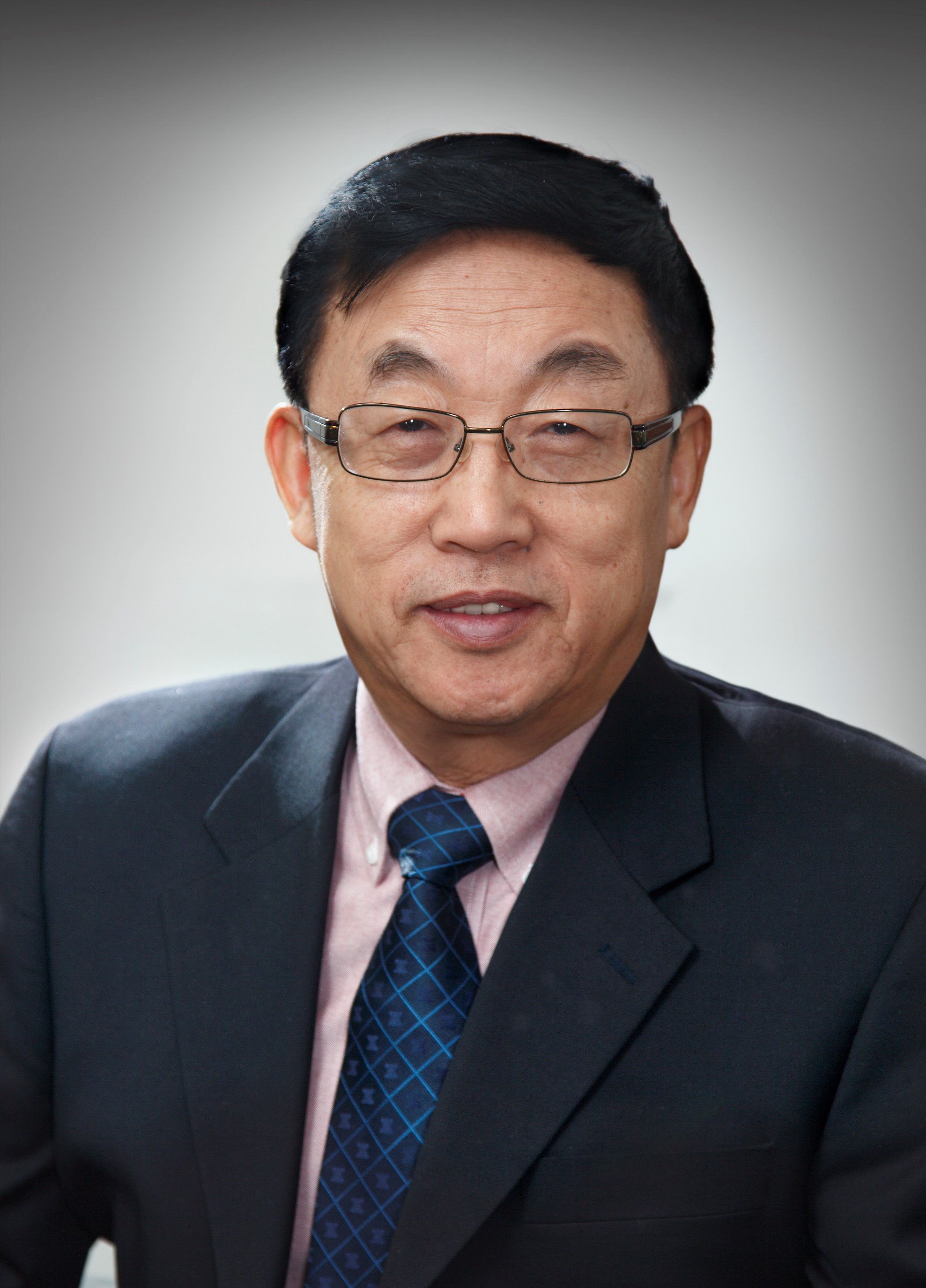}}]{Tianyou Chai} (Life Fellow, IEEE) received the Ph.D. degree in control theory and engineering in 1985 from Northeastern University, Shenyang, China, where he became a Professor in 1988. He is the founder and Director of the Center of Automation, which became a National Engineering and Technology Research Center and a State Key Laboratory. He is a member of Chinese Academy of Engineering, IFAC Fellow and IEEE Fellow. He has served as director of Department of Information Science of National Natural Science Foundation of China from 2010 to 2018. His current research interests include modeling, control, optimization and integrated automation and intelligence of complex industrial processes.
\end{IEEEbiography}

\vfill

\end{document}